\definecolor{modification}{rgb}{0, 0, 0}
\theoremstyle{plain} 
\newtheorem{thm}{Theorem} 
\newtheorem{lem}{Lemma}
\newcounter{scount}[subsection] 
\theoremstyle{definition} 
\newtheorem{defn}{Definition} 
\theoremstyle{remark} 
\newtheorem*{rem*}{Remark} 
\global\long\def\ket#1{|#1\rangle}%
\global\long\def\kb#1#2{|#1\rangle\langle#2|}%
\global\long\def\Hs{\mathcal{H}}%
\global\long\def\id{\mathbb{I}}%
\global\long\def\tr{\text{Tr}}%
\global\long\def\Luders{\text{L{\"u}ders }}
\global\long\def\urule{\omega}
\global\long\def\uruleA{\urule}
\global\long\def\uruleL{\urule^{\mathsf{L}}}
\global\long\def\uruleP{\urule^{\mathsf{P}}}
\definecolor{mycolour}{rgb}{0, 0, 0}
\definecolor{mycolour2}{rgb}{0, 0, 0}
\definecolor{red}{rgb}{255, 0, 0}
\DeclareRobustCommand{\loongrightarrow}{%
  \DOTSB\relbar\joinrel\relbar\joinrel\relbar\joinrel\rightarrow
}
\DeclareRobustCommand{\loongmapsto}{\DOTSB\mapstochar\loongrightarrow}
\newcommand{\ignore}[1]{}
\title{Beyond the Projection Postulate and Back:\\ Quantum Theories with Generalised State-Update Rules}
\author{Vincenzo Fiorentino and Stefan Weigert\\
\small vincenzo.fiorentino@york.ac.uk, stefan.weigert@york.ac.uk \\}
\date{%
    \normalsize  Department of Mathematics, University of York, York YO10 5GH, United Kingdom\\[2ex]%
    January 2026\footnote{Originally uploaded 6 June 2025; this is the version published as \href{https://doi.org/10.1103/2zpm-jsh7}{Phys. Rev. A 113 012204 (2026)}.} 
}
\begin{document}

\maketitle

\begin{abstract}
Are there consistent and physically reasonable alternatives to the projection postulate? Does it have unique properties compared with acceptable alternatives? We answer these questions by systematically investigating hypothetical state-update rules for quantum systems that Nature could have chosen over the \Luders rule. Among other basic properties, any prospective rule must define unique post-measurement states and not allow for superluminal signalling. Particular attention will be paid to consistently defining post-measurement states when performing local measurements in composite systems. Explicit examples of valid unconventional update rules are presented, each resulting in a distinct, well-defined foil of quantum theory. This framework of state-update rules allows us to identify operational properties that distinguish the projective update rule from all others and to put earlier derivations of the projection postulate into perspective.
\end{abstract}

\maketitle

\tableofcontents

\section{Introduction}

The main goal of \textcolor{modification}{our} contribution is to identify operational properties that distinguish the quantum-mechanical projection postulate from \textcolor{modification}{hypothetical but} physically reasonable state-update rules.  
\textcolor{modification}{Not surprisingly, attempts to justify \Luders rule (for which both Dirac and von Neumann deserve credit)\footnote{\textcolor{modification}{In view of the historic developments, it would be appropriate to speak of the ``Dirac–von Neumann-Lüders rule'' \cite{sudbery_whose_2024} but we will continue using ``\Luders rule'' for simplicity. However, the projection postulate as formulated by Dirac \cite{dirac_principles_1930} and Lüders \cite{luders_uber_1950} differs from von Neumann's version \cite{vonneumann_mathematische_1932} in key aspects; see Sec.\ \ref{sec: ruled out theories}.}}} \textcolor{modification}{have a long history, and a recent debate revolves around the claim that it can be derived from the other postulates of quantum theory (cf.\ Sec.\ \ref{Qmstateupdaterule}).}
Many of these approaches have in common that their focus is on non-composite systems, either ignoring measurements on composite systems or making implicit assumptions about their consequences.

\textcolor{modification}{Our main tool will be the construction of \textit{foil theories} of quantum theory (cf.\ Sec.\ \ref{subsec: foil theories}), i.e.\ toy theories that help ``to highlight the distinctive characteristics of quantum theory by contrasting with it'' (adapted from \cite{chiribella_quantum_2016})}. \textcolor{modification}{Thus, we investigate the nature of the projection postulate by comparing quantum theory with similar theories that are equipped with different }\textit{generalised state-update rules} (GURs). In other words, we introduce \textit{generalised state-update theories} (GUTs) for which the post-measurement states associated with a given outcome are not necessarily those prescribed by the \Luders rule. From the outset, we will take into account both single and composite systems. 
Alternative post-measurement rules have been considered occasionally (cf.\ Sec.\ \ref{Qmstateupdaterule}), but we are not aware of any systematic approach.

To qualify as a legitimate update rule, the assignment of post-measurement states must satisfy a number of  well-motivated physical requirements. The framework we present is broad enough to accommodate the standard quantum-mechanical projection postulate as well as many hypothetical rules, including the so-called \textit{passive measurements} that leave the state of a system untouched while still producing outcome probabilities governed by Born's rule. 

In this setting, one can easily compare the properties of different update rules and, importantly, seek to identify those that are unique to the \Luders rule. It also provides a suitable backdrop to assess existing derivations of  the \Luders rule summarised in Appendix \ref{sec:historical review Luders}. 

This paper is organised as follows. In Sec.\ \ref{background} we set the scene by briefly reviewing earlier discussions related to the quantum mechanical state-update rule and describe the idea of quantum foil theories. Sec.\ \ref{sec: AMT Framework} presents a simple axiomatic formulation of quantum theory and a detailed analysis of the state-update in composite quantum systems. In Sec.\ \ref{AURules}, we describe the basic requirements that state-update rules must satisfy and how they give rise to quantum foil theories. Explicit examples of legitimate and invalid update rules are given in Sec.\ \ref{sec: examples}. In Sec.\ \ref{sec: singling out Lueders} we explain what distinguishes the quantum-mechanical projection postulate from other valid update rules. In the concluding section, we summarise our results, discuss them, and outline future research.

\section{Background and motivation} \label{background}
\subsection{The quantum-mechanical state update rule}\label{Qmstateupdaterule}
Using quantum systems as information carriers has led to applications without classical equivalents. Quantum measurements, in particular, provide a resource that contributes, for example, to secure communication protocols \cite{bennett_quantum_2014}, quantum teleportation \cite{bennett_teleporting_1993} device-independent random number generators \cite{colbeck_quantum_2009}, and to speedy algorithms when using measurement-based quantum computation \cite{briegel_measurement-based_2009}.

The advantages of quantum measurements arise from fundamental differences when compared with measurements in classical theories. Classical measurements are deterministic in that they reveal pre-existing values of observables, and the effect on the state of the observed system is negligible, at least in principle. In quantum theory, measurement outcomes are probabilistic, and the state of a system after a measurement coincides with the pre-measurement state only in exceptional cases. Phenomenologically, the quantum mechanical state update is governed by the \textit{projection postulate}, which describes a ``collapse'' of the state to an eigenstate of the operator representing the observed outcome \textcolor{modification}{\cite{dirac_principles_1930, vonneumann_mathematische_1932, luders_uber_1950}}. The collapse, also known as \textit{state reduction} or \textit{\Luders rule} \cite{luders_uber_1950}, ensures \textit{deterministic repeatability}: two measurements of an observable always yield identical outcomes when implemented in quick succession \cite{compton_directed_1925}.

The standard postulates of quantum theory neither explain whether the state-update is a physical process \cite{bohm_suggested_1952, dewitt_many-worlds_2015}, nor do they define what constitutes a measurement \cite{bell_against_1990, bacciagaluppi_heisenberg_2009}. The foundational questions resulting from this dilemma are often grouped under the term ``\textit{measurement problem}'' \cite{busch_quantum_1996, norsen_foundations_2017}. The present work does \textit{not} attempt to resolve the interpretational challenges posed by this long-standing problem.

\textcolor{modification}{Derivations} of the \Luders rule have been given \cite{bell_moral_1966, herbut_derivation_1969, hellmann_quantum_2016, kostecki_luders_2014, martinez_search_1990} (see Appendix \ref{sec:historical review Luders} for a survey), and the difference between von Neumann's and Lüders' formulations of the projection postulate has been discussed repeatedly \cite{luders_uber_1950, busch_luders_1998, sudbery_whose_2024}. 
Some authors have argued that the projection postulate follows from the other standard postulates of quantum theory \cite{ozawa_quantum_1984, masanes_measurement_2019} and the debate of a recent derivation is ongoing \cite{stacey_masanes-galley-muller_2022, galley_reply_2022, kent_measurements_2023, masanes_response_2025, stacey_contradictions_2024}.

Occasionally, alternative post-measurement rules have been considered, either in the context of standard quantum theory \cite{kent_nonlinearity_2005, kent_quantum_2021, masanes_measurement_2019} or of generalised probabilistic theories \cite{kleinmann_sequences_2014}. A systematic investigation, as proposed here, has apparently not been carried out before.

\subsection{Foil theories} \label{subsec: foil theories}
Our approach is inspired by the constructive role of foil theories for the understanding of quantum theory \textcolor{modification}{\cite{chiribella_quantum_2016}}.  The strategy in defining foil theories is to embed quantum theory in a larger set of theories by relaxing specific assumptions about the structure of the state space, say, or the form of the time evolution.  We briefly describe typical examples of foil theories and point out that a systematic investigation of generalised state-update rules has, apparently, not yet been carried out.

\textit{Generalised probabilistic theories} (GPTs) \cite{barrett_information_2007, hardy_quantum_2001, janotta_generalized_2014}, for example, represent a large and versatile collection of foil theories. They are defined by their own sets of states, observables, and rules for system composition, all of which may differ from those of standard quantum theory. GPTs have become a standard tool to explore and identify distinctive characteristics of quantum theory which they contain as a special case. Many GPTs share properties such as the no-cloning theorem, uncertainty relations, or teleportation with quantum theory \cite{barnum_cloning_2006, oppenheim_uncertainty_2010}. Some of them reproduce the standard quantum-mechanical correlations \cite{wright_gleason-type_2019}, while others exhibit superquantum correlations, e.g. Boxworld \cite{popescu_quantum_1994}. 

\textit{Ontological models} describe a conceptually different class of foil theories not necessarily based on the familiar Hilbert space structure of quantum theory. They aim to explain its predictions in terms of an observer’s incomplete knowledge about the \textit{ontic state} that provides an objective and complete description of the state of a system. In these models, non-classical features of quantum theory such as complementarity of observables, superdense coding, interference phenomena, or uncertainty relations may result from \textit{epistemic restrictions} \cite{spekkens_evidence_2007, bartlett_reconstruction_2012}. 

Other foil theories differ from quantum theory in a single key feature only, such as attempts to modify \textit{Born’s rule} \cite{galvan_generalization_2008,galley_classification_2017}, or the \textit{tensor product} for system composition \cite{erba_composition_2024}. The consequences of associating observables with \textit{non-Hermitian operators} have been investigated, using \textit{PT-symmetric} \cite{bender_pt_2024} or \textit{normal} \cite{roberts_observables_2018} operators, for example. Quantum-like theories defined over the \textit{real numbers} \cite{stueckelberg_quantum_1960, renou_quantum_2021} or \textit{quaternions} \cite{finkelstein_foundations_1962}  rather than the field of complex numbers have also been studied.

\textit{Nonlinear generalisations of Schr{\"o}dinger's equation} have been suggested \cite{beretta_quantum_1984, abrams_nonlinear_1998, ferrero_nonlinear_2004, kent_nonlinearity_2005, rembielinski_nonlinear_2020}, most notably in Weinberg's work \cite{weinberg_testing_1989}. Gisin et al.\ \cite{gisin_weinbergs_1990, simon_no-signaling_2001,bassi_no-faster-than-light-signaling_2015} pointed out that they imply violations of the no-signalling principle, although some specific nonlinear transformations were later found not to be ruled out by Gisin’s argument \cite{czachor_nonlocal_1998, kent_nonlinearity_2005, ferrero_nonlinear_2004, helou_extensions_2017, rembielinski_nonlinear_2020}.

In the specific context of modifications to the projection postulate, ``\textit{causal quantum theory}'' \cite{kent_causal_2005} has been proposed, positing that the \Luders collapse is a well-defined physical process that satisfies strict local causality. Additionally, Kent introduced a new type of hypothetical measurement that is capable of revealing complete or partial information about the ``\textit{local}'' state of a system, without actually inducing a state update \cite{kent_nonlinearity_2005, kent_quantum_2021, kent_measurements_2023}, an idea first presented in \cite{busch_is_1997}. Such a measurement would, effectively, allow for yet other nonlinear state transformations without enabling superluminal signalling.

\section{Quantum theory and the projection postulate} \label{sec: AMT Framework}

\subsection{Axioms of quantum theory} \label{sec: axioms QT}

The foil theories we introduce will differ from quantum theory only in their \textit{update rule} assigning post-measurement states. The other standard postulates of non-relativistic, finite-dimensional quantum theory remain unchanged.
\begin{itemize}
    \item[($\mathsf{S}$)] The \textit{states} of a quantum system correspond to density operators $\rho \in \mathcal{S}(\Hs)$, non-negative operators with unit trace acting on a finite-dimensional, separable and complex Hilbert space $\Hs$ associated with the system.
        \item[($\mathsf{T}$)] The \textit{time evolution} of states, as well as any other reversible transformation, is described by a unitary operator $U\in \mathcal{U}(\Hs)$.
    \item[($\mathsf{C}$)] The state space of a \emph{composite} system is obtained from tensoring the state spaces describing its constituents, i.e. by $\mathcal{S}_{AB}=\mathcal{S}(\Hs_A \otimes \Hs_B)$. Given a state $\rho_{AB}$ of the composite system, the state of subsystem $A$ is obtained by tracing out the other one, $\rho_A = \text{Tr}_B \left[ \rho_{AB} \right]$, etc.   
    \item[($\mathsf{O}$)] \emph{Observable} quantities are represented by collections $M= \{P_x\}_{x=1}^n$ of mutually orthogonal projectors $P_x\in \mathcal{P}(\Hs)$, $P_xP_y=P_x \delta_{xy}$, that sum to the identity operator on $\Hs$, i.e.\ $\sum_x P_x=\id$. Each projector $P_x$ represents a possible outcome of a measurement of the observable $M$.
    \item[($\mathsf{B}$)] The \emph{probability} that a measurement of $M$ performed on a system in the state $\rho$ yields the outcome represented by $P_x$ is given by the \emph{Born rule}: $\text{prob}(x)=\text{Tr} \left[ \textcolor{modification}{P_x} \, \rho \right]$.
\end{itemize}

Finally, the \emph{projection postulate} describes what happens to the state of a quantum system when an observable is measured.

\begin{itemize}
    \item[($\mathsf{L}$)] If a measurement of $M$ yields an outcome represented by the projection $P_x$, then the initial state $\rho \in \mathcal{S}(\Hs)$ will update in the following way,
    \begin{equation} \label{eq: normalised collapsed state}
       \rho \; \stackrel{x}{\longmapsto} \; \rho_x = \frac{P_x \rho P_x}{\text{Tr} \left[ P_x \rho P_x \right] } \in \mathcal{S}(\Hs)\, .
    \end{equation}
\end{itemize}

The focus of our work is to investigate the role of postulate ($\mathsf{L}$) by replacing it with alternative update rules. 
In view of later developments, it will be convenient to suppress the normalisation factor in Eq. \eqref{eq: normalised collapsed state}. We introduce the map 
\begin{equation} \label{Lueders}
     \rho \; \stackrel{x}{\longmapsto} \uruleL (P_x, \rho) = P_x \rho P_x \,,
\end{equation}
which sends $\rho$ to a sub-normalized state\footnote{A \emph{sub-normalised} state is a non-negative operator on a Hilbert space with trace less than or equal to $1$, i.e.\ $\rho \in \bar{\mathcal{S}}(\Hs) = \{ \lambda \rho : \lambda \in [0,1], \rho \in \mathcal{S}(\Hs) \}$.} in the space $\bar{\mathcal{S}}(\Hs)$, and call it the \textit{\Luders  rule}, giving postulate ($\mathsf{L}$) its name. The trace of the post-measurement state, $\text{Tr} (\uruleL (P_x, \rho))=\text{Tr}(P_x \rho)$, yields the outcome probability as prescribed by the Born rule of postulate ($\mathsf{B}$). In addition, the maps  \eqref{eq: normalised collapsed state} and \eqref{Lueders} do not require the input state to be normalised. Replacing $\rho $ by $\lambda \rho$, $\lambda \in [0,1]$, shows that the \Luders rule is \textit{1-homogeneous}:  $\uruleL (P_x, \lambda \rho)=\lambda \uruleL (P_x, \rho) $.

The \textit{alternative update rules} $\urule (P_x, \rho)$ considered in Sec.\ \ref{sec: examples} will be modelled on the \textit{\Luders rule} in the sense that they are 1-homogeneous, only depend on the measurement outcome $P_x$ and the initial state $\rho$, and agree with the Born rule for outcome probabilities.\footnote{Postulate $(\mathsf{B})$ is understood to assign outcome probabilities to each \textit{individual} measurement. Consequently, Born's rule on its own does not account for \textit{correlations} between outcomes of \textit{multiple} measurements. Instead, the \textit{update rule} is also responsible for the joint probabilities for the outcomes of measurements carried out sequentially, or by distinct parties in composite systems. This subtle point proves significant when examining modifications of the \Luders rule $(\mathsf{L})$; see Appendix \ref{sec:historical review Luders} for further discussion.}
Further physically motivated properties that any \textit{acceptable} update rule should satisfy will be discussed in Sec.\ \ref{AURules}. 

Before defining update rules replacing ($\mathsf{L}$), we must take a closer look at the effect of quantum measurements performed on composite systems. As it stands, the rule ($\mathsf{L}$) does, in fact, \textit{not} specify the post-measurement state of a composite system when measurements are performed on \textit{subsystems}. The post-measurement state traditionally attributed to a quantum system relies on an additional implicit assumption. 

\subsection{Measurements on composite quantum systems} \label{LuedersOnCompSys}

The projection postulate ($\mathsf{L}$) does not explicitly cover \textit{local} measurements which arise in \textit{composite} quantum systems, i.e.\ measurements that are carried out on subsystems. To discuss this situation in detail, we will consider a system with \textit{two} constituents $A$ and $B$, say, and an associated Hilbert space $\Hs_{AB} = \Hs_A \otimes \Hs_B$, in line with axiom ($\mathsf{C}$). 

Suppose we wish to measure the \textit{global} observable $M_{AB}$, represented by an operator acting on the product space $\Hs_{AB}$.  To this  effect, a \textit{global} device $\mathcal{D}_{AB}$ must be used that interacts with the composite system in its entirety.   The outcome of the measurement is represented by one of the projection operators $P_x^{AB}$ (associated with $M_{AB}$), that sum to the identity on the space $\Hs_{AB} $, as described in axiom ($\mathsf{O}$). The probability to obtain any of these outcomes is found from axiom ($\mathsf{B}$), and the state $\rho_{AB}$ of the overall system will update according to the \Luders rule ($\mathsf{L}$) given in Eq.\ (\ref{eq: normalised collapsed state}). The projectors $P_x^{AB}$ do not have to respect the product structure of the space $\Hs_{AB}$: for a two-qubit system, they may project onto the states of the Bell basis, for example.  At no point do global measurements refer to the product structure of the space $\Hs_{AB}$.

However, on a bipartite quantum system, one may also carry out \textit{local} measurements. They arise naturally when its constituents are in separate locations. Local measurements are implemented by measuring devices $\mathcal{D}_{A}$ or $\mathcal{D}_{B}$ that act on subsystems only, i.e.\ the constituents $A$ or $B$, respectively. The postulate ($\mathsf{L}$) does not cover this situation: no post-measurement state $\rho_{AB,x}\in \mathcal{S}_{AB} \equiv \mathcal{S}(\Hs_A \otimes \Hs_B)$ is specified by ($\mathsf{L}$) when using the device $\mathcal{D}_{A}$ to carry out a measurement of the local observable $M_A$, resulting in the outcome $x$. Let us explain why.

According to axiom ($\mathsf{O}$),  the local observable $M_A$ corresponds to a collection of mutually orthogonal projectors $\{P_x^A\}_{x=1}^n$,  defined on $\Hs_A$ and summing to the identity,  $\sum_x P_x^A=\id_A$.  Thus, we associate measurement outcomes of $M_A$ with projectors $P_x^A$.  However, neither Eq.\ (\ref{eq: normalised collapsed state}) nor (\ref{Lueders}) can be used to determine the post-measurement state of the system as a whole: to apply  \Luders rule, we need to use (global) projectors $P_x^{AB}$ defined on $\Hs_{AB}$ but the (local) operators $P_x^A$ act on $\Hs_A$ only. 

For the experimenter carrying out a local measurement on subsystem $A$, the existence of the other constituent $B$ is irrelevant. 
If the composite system resides in the state $\rho_{AB}$, we invoke axiom ($\mathsf{C}$): the initial state of subsystem $A$ is given by the density matrix $\rho_A$ obtained by tracing out subsystem $B$. Then, we are in a position to apply ($\mathsf{L}$) to find the post-measurement state of subsystem $A$ upon obtaining the local outcome $P_x^A$, namely
\begin{equation} \label{rho_A map}
     \rho_A \; \stackrel{x}{\longmapsto} \rho_{A,x} \propto \; \uruleL (P_x^A, \rho_A)\,, 
\end{equation} 
where the operators $\rho_A=\text{Tr}_B(\rho_{AB})$ and $P_x^A$ are defined on the \textit{same} Hilbert space $\Hs_A$, as required by the \Luders rule. The relation \eqref{rho_A map}  does, of course, not mean that the system as a whole now resides in a product state.
Importantly, when obtaining a specific outcome in a single run, Eq.\ \eqref{rho_A map} says nothing about the post-measurement state of the other subsystem, or of the system as a whole.\footnote{It is only possible to conclude that the pre-measurement state $\rho_{AB}$ assigns a non-zero probability to obtaining the outcome $P^A_{x}$.} Read in this way, the statement of \Luders rule given by ($\mathsf{L}$) is \textit{incomplete} as it does not fully prescribe how to update the state of a quantum system when a local measurement is performed.\footnote{The update rule also seems to be incomplete if one were to use a different formulation of ($\mathsf{L}$) stating that, after a measurement, the system resides in an eigenstate of the measured observable.}

There are, however, consistency requirements when performing measurements on a composite system. If the outcome $P_x^A$ is \textit{non-degenerate}, then the \Luders  rule assigns a \textit{pure} post-measurement state $\uruleL (P_x^A, \rho_A)$ to subsystem $A$. Consequently,  the post-measurement state of the composite system must be a \textit{product} state, \textcolor{modification}{regardless of} the initial state  $\rho_{AB}$. On its own, the projection postulate ($\mathsf{L}$) provides, however, \textit{no} information about the local state at $B$ upon performing a local measurement of $M_A$ on the constituent $A$. In principle,\textit{ any} density matrix $\rho_B  = \textcolor{modification}{\text{Tr}_A} (\rho_{AB})  \in \mathcal{S}(\Hs_B)$ is compatible.  For example, one could imagine that any local measurement at $A$  throws the state of the composite system into a product state, with the post-measurement state at $B$ being equal to the maximally mixed state $\rho_B \propto \id_B$.  

The correct global post-measurement state after a local measurement on a quantum system is, of course, well-known. Calculating quantum correlations in Bell-type experiments relies on knowing the state of subsystem $B$ if the outcome of a local measurement at $A$ has been found.\footnote{Recall that, mathematically, the state that correctly reproduces the experimental findings is obtained in the following way. Write the initial state corresponding to $\rho_{AB}$ as a superposition of product terms, using the  eigenstates of the operator $M_A$ for the first factor. Then, identify the term with the label $x$, the value of the observed measurement outcome in a given run. The second factor in this expression characterises the correct post-measurement state for subsystem $B$.} Formally, upon measuring the local observable $M_A$ and finding $P_x^A$ given $\rho_{AB} \in \mathcal{S}(\Hs_A \otimes \Hs_B)$, the state of the composite system updates in this way,
\begin{equation} \label{eq: normalised collapsed bipartite state}
       \rho_{AB} \; \stackrel{x}{\longmapsto} \; \rho_{AB,x} = \frac{P_x^A \otimes \id_B \, \rho_{AB} \, P_x^A\otimes \id_B}{\text{Tr} \left[ P_x^A \otimes \id_B \, \rho_{AB} \, P_x^A \otimes \id_B \right] } \in \mathcal{S}(\Hs_A \otimes \Hs_B)\, . 
    \end{equation}

This rule generalises the postulate ($\mathsf{L}$) to cover the update of the \textit{global} state $\rho_{AB}$ caused by a \textit{local} measurement,
\begin{equation} \label{eq: normalised collapsed bipartite state 2}
       \rho_{AB} \; \stackrel{x}{\longmapsto} \; \uruleL(P_x^A,\rho_{AB}) = P_x^A \otimes \id_B \, \rho_{AB} \, P_x^A\otimes \id_B \in \bar{\mathcal{S}}(\Hs_A \otimes \Hs_B)\, .
    \end{equation}
    
Stated differently, we are making the point that the projection postulate ($\mathsf{L}$) or, equivalently, L{\"u}ders  rule \eqref{Lueders}, does \textit{not} say how the state of a composite system updates under local measurements. Additional \textit{experimental} input is required to arrive at the rule \eqref{eq: normalised collapsed bipartite state}. For simplicity, we continue to use the symbol $\uruleL$ to denote the \textit{complete} update rule which defines the effects of both local \textit{and} global measurements.  

The update rule \eqref{eq: normalised collapsed bipartite state} does have a simple  physical interpretation. Suppose we use a device $\mathcal{D}_{AB}$ to perform a \textit{global} measurement of the observable  $M_A \otimes \id_B$, with  outcomes represented by the projector $P_x^A \otimes \id_B$. To find the post-measurement state of the composite system, we can apply the original form of the projection postulate ($\mathsf{L}$), only to find the update rule  \eqref{eq: normalised collapsed bipartite state}. 

The distinction between a local measurement of $M_A$ implemented by the device $\mathcal{D}_A$ and a global one of $M_A \otimes \id_B$, effected by the device $\mathcal{D}_{AB}$, will be important to express the requirements that alternatives to the \Luders rule must satisfy.

\section{Generalising the state-update rule} \label{AURules}

\subsection{Preliminaries}

In this section, we will define the properties that an acceptable state-update rule must have. The main role of an update rule is, of course, to determine the post-measurement state of the measured system. We need to allow that measurements on a part of a composite system may affect the state of other subsystems not acted upon. This behaviour is known to occur in quantum theory and gave rise to the discussion in Sec.\ \ref{LuedersOnCompSys}. In other words, we must ensure that an update rule assigns post-measurement states consistently to \textit{composite} systems when local measurements are performed. For convenience, we will focus on bipartite systems; the extension to multipartite systems is straightforward.\footnote{Anticipating the discussion of Sec.\ \ref{sec: proper vs improper} on the role \textit{proper} and \textit{improper} mixed states, we assume for now that the density operators we consider represent either pure states or improper mixed states, i.e.\ those arising from entanglement with other systems.}

A generic update rule will be denoted by the letter $\omega$, with a subscript identifying the possibly composite system to which it applies. Let us consider a composite system with Hilbert space $\Hs_A \otimes \Hs_B$, initially residing in the state $\rho_{AB} \in \mathcal{S}(\Hs_A \otimes \Hs_B)$. If a local measurement is performed on subsystem $\Hs_A$ and yields the outcome $P_x^A$, the operator $\uruleA_{AB}(P_x^A, \rho_{AB})$ will represent the sub-normalised post-measurement state of the system, in analogy with the \Luders rule of Eq.\ \eqref{eq: normalised collapsed bipartite state 2}.\footnote{A different formula is generally required to describe the update of \textit{proper} mixed states, i.e.\ those due to the incomplete knowledge about the preparation of a system (cf.\ Sec.\ \ref{sec: proper vs improper}).} The subsystem on which the measurement is performed can be deduced from the Hilbert space associated with the projector $P_x^A$ characterising the measurement outcome. In the current example, it is the system with label $A$.

Any update rule must conform with postulate ($\mathsf{B}$) implying a non-trivial constraint. By taking the trace of a sub-normalised post-measurement state, the Born rule prescribes the probability to obtain outcome $P_x^A$,
\begin{equation} \label{eq: Born rule consistency}
        \text{Tr}\left[\uruleA_{AB}\left( P_x^A, \rho_{AB} \right) \right] = \text{Tr}\left( P_x^A \otimes \id_B  \, \rho_{AB} \right) = \text{Tr}\left( P_x^A   \, \text{Tr}_B \left( \rho_{AB} \right) \right) \, .
\end{equation}
An analogous condition is, of course, assumed for measurements on subsystem $\Hs_B$.

When considering sequences of measurements, \textit{1-homogeneity}, described following Eq.\ \eqref{Lueders} for the quantum-mechanical update rule, is a convenient mathematical property. It means that an update rule satisfies the constraint
\begin{equation} \label{eq: 1-homogeneity}
    \uruleA_{AB}  \left( P_x^A, \lambda \, \rho_{AB} \right) = \lambda \, \uruleA_{AB} \left( P_x^A, \rho_{AB} \right)\,, \quad \lambda \in \left[0,1 \right] \,,
\end{equation}
for all initial states $\rho_{AB} \in \bar{\mathcal{S}}(\Hs_A \otimes \Hs_B)$  and measurement outcomes $P_x^A\in\mathcal{P}(\Hs_A)$. This property allows us to concatenate update rules without the need to repeatedly normalise post-measurement states. Now suppose that two time-ordered measurements with outcomes $P_x$ and $P_y$ are performed on a composite system in state $\rho_{AB}$. Then, the final state is simply given by $\uruleA_{AB} \left( P_y, \uruleA_{AB} \left( P_x, \rho_{AB} \right) \right)$. Here the outcomes $P_x$ and $P_y$ may correspond to measurements performed on the same subsystem or on different subsystems.

\subsection{Necessary properties of generalised update rules} \label{subsec: assumptions we make}

We will now list physically or operationally motivated requirements for update rules. These assumptions will lead to a concise definition of a generalised update rule, given in Def.\ \ref{def: update rule}. Naturally, these requirements are satisfied by the quantum mechanical \Luders rule.

The requirements will take into account that update rules must hold for single and composite systems. For simplicity, we will usually consider only measurements on one subsystem, say $\Hs_A$, though analogous conditions are assumed to apply to measurements on \emph{any} subsystem.

The first three assumptions do not explicitly rely on the composite structure.
\textit{Completeness} requires that every conceivable measurement outcome specifies a unique post-measurement state. \textit{Context-independence} asserts that the observed outcome and the pre-measurement state alone determine the post-measurement state. It introduces a form of non-contextuality \cite{kochen_problem_1967} that is present in quantum theory---namely, that all operationally equivalent experimental runs (i.e.\ involving the same initial state and the same measurement outcome) yield identical post-measurement states. Mathematically, completeness and context-independence determine the co-domain and the domain of the update rule $\uruleA_{AB}$ for a composite system, respectively.
\begin{description}
    \item[\textbf{(A1-A2)}] \textbf{Completeness} $\&$ \textbf{context-independence} A complete and context-independent update rule maps any combination of a (local) outcome, $P_x^A\in\mathcal{P}(\Hs_A)$, and a sub-normalised state, $\rho_{AB} \in \bar{\mathcal{S}}(\Hs_A \otimes \Hs_B)$, to another sub-normalised state,
    \begin{equation} \label{eq: completeness & context-independence}
    \uruleA_{AB}: \mathcal{P}(\Hs_A) \times \bar{\mathcal{S}}(\Hs_A \otimes \Hs_B) \to \bar{\mathcal{S}}(\Hs_A \otimes \Hs_B) \, .
\end{equation}
\end{description}

Thirdly, \textit{local covariance} ensures that different observers provide consistent descriptions of the same experiment. For a measurement on a non-composite system, suppose Alice's description involves the state $\rho$ and outcome $P_x$. In Bob's reference frame, the same measurement will be described by the state $U \rho U^{\dagger}$ and the outcome $U P_x U^{\dagger}$, where $U$ represents the frame change. Local covariance ensures that Alice's and Bob's post-measurement states remain related by $U$, preventing any contradictory predictions concerning future measurements. Effectively, we require that, instead of applying a unitary $U$ on a system after measuring an observable $\{ P_x \}_x$, one can equivalently implement it before measuring the suitably transformed observable $\{ UP_x U^{\dagger} \}_x$. The term ``local'' gains particular significance when considering the more general scenario of a measurement performed on a subsystem of a composite system. To ensure consistency across descriptions of a local measurement, the same joint state must be obtained when applying the unitary transformation $U_A \otimes U_B$ (\textit{i}) after measuring $\Hs_A$ with outcome $P_x^A$, or (\textit{ii}) before measuring $\Hs_A$ with rotated outcome $U_A P_x^A U_A^{\dagger}$. Local unitaries $U_A \otimes U_B$ represent changes to frames that preserve the composite structure of $\Hs_A \otimes \Hs_B$.
\begin{description}
    \item[\textbf{(A3)}] \textbf{Local covariance} For all measurement outcomes $P_x^A\in\mathcal{P}(\Hs_A)$, states $\rho_{AB} \in \bar{\mathcal{S}}(\Hs_A \otimes \Hs_B)$ of the joint system, and local transformations $U_A \in \mathcal{U}(\Hs_A)$ and $U_B \in \mathcal{U}(\Hs_B)$, we must have
    \begin{equation} \label{eq: unitary invariance}
        \uruleA_{AB} \left( U_A P_x^AU_A^{\dagger}, U_A\otimes U_B \rho_{AB} U_A^{\dagger} \otimes U_B^{\dagger} \right) =\left(U_A \otimes U_B  \right) \left(\uruleA_{AB} \left( P_x^A, \rho_{AB} \right)  \right) \left(U_A^{\dagger} \otimes U_B^{\dagger}  \right) \, .
    \end{equation}
\end{description}

The remaining assumptions rely on the composite structure of the system.
We require \textit{self-consistency} of the update rule: the assigned post-measurement states must not depend on whether or not the probed system is regarded as part of a larger composite system. This property ensures that the assignment of post-measurement states for composite systems is unambiguous. Violating self-consistency would result in an inconsistent treatment of measurements in the framework, wherein an observer's choice of whether to regard a system as a \textit{sub}system (and how to partition its environment) would have measurable effects. Mathematically, self-consistency is ensured by the requirement that the state of any system after the measurement could also be obtained by applying $\uruleA$ to a larger system and tracing out the irrelevant subsystems.
\begin{description}
    \item[\textbf{(A4)}] \textbf{Self-consistency} Suppose that the system with label $B$ is, in fact, a composite system, i.e.\ $\Hs_B = \Hs_{B^{\prime}} \otimes \Hs_{B^{\prime \prime}}$, hence the update rule can be relabeled as $\uruleA_{AB}=\uruleA_{AB^{\prime}B^{\prime \prime}}$. For all $\Hs_{B^{\prime \prime}}$, joint states $\rho_{AB^{\prime}B^{\prime \prime}} \in \bar{\mathcal{S}}(\Hs_{A} \otimes \Hs_{B^{\prime}} \otimes \Hs_{B^{\prime \prime}})$ and measurement outcomes $P_x \in \mathcal{P}(\Hs_A)$, we must have
    \begin{equation} \label{eq: self-consistency}
        \text{Tr}_{B^{\prime \prime}} \left[ \uruleA_{AB^{\prime}B^{\prime \prime}} \left( P_x^A, \rho_{AB^{\prime}B^{\prime \prime}} \right) \right] = \uruleA_{AB^{\prime}} \left( P_x^A, \text{Tr}_{B^{\prime \prime}} \left( \rho_{AB^{\prime}B^{\prime \prime}} \right) \right)  \, .
    \end{equation}
\end{description}
Self-consistency entails that the update rule $\uruleA_A$ of a non-composite system emerges as a special case of the composite-system update rule $\uruleA_{AB}$: letting $\rho_A = \text{Tr}_B (\rho_{AB})$, we have
\begin{equation} \label{eq: single rule from composite rule}
    \uruleA_A \left( P_x^A, \rho_A \right) = \text{Tr}_B \left( \uruleA_{AB} \left( P_x^A , \rho_{AB} \right) \right) \, .
\end{equation}

Any update rule must respect the (quantum) \textit{no-signalling principle}, which prohibits space-like separated parties from communicating through \textit{local} measurements. Mathematically, we require that unconditional measurements---where no specific outcome is selected---do not alter the reduced state of unmeasured subsystems. 
\begin{description}
    \item[\textbf{(A5)}] \textbf{No-signalling} For all joint states $\rho_{AB} \in \bar{\mathcal{S}}(\Hs_{A} \otimes \Hs_B )$ and all local observables $\{ P_x^A \}_x$ on $\Hs_A$, we must have
    \begin{equation} \label{eq: no-signalling}
        \sum_x \text{Tr}_A \left[ \uruleA_{AB} \left( P_x^A, \rho_{AB} \right) \right] = \text{Tr}_A \left( \rho_{AB} \right) \, .
    \end{equation}
\end{description}

Finally, we assume \textit{local commutativity}: conditioned on fixed outcomes for measurements on different subsystems, the order in which these measurements are performed does not matter. As with no-signalling, local commutativity finds justification in special relativity. Suppose that local measurements on two space-like separated subsystems yield outcomes $P_x^A$ and $P_y^B$. Since the time-ordering of the two measurements is relative, we require that the same post-measurement state of the combined system is obtained irrespective of whether $P_x^A$ or $P_y^B$ is measured first. This property also implies that the post-measurement state of the combined system is defined unambiguously if $P_x^A$ and $P_y^B$ are measured simultaneously.
\begin{description}
    \item[(A6)] \textbf{Local commutativity} For all measurement outcomes $P_x^A\in \mathcal{P}(\Hs_A)$, $P_y^B\in \mathcal{P}(\Hs_B)$ and joint states $\rho_{AB} \in \bar{\mathcal{S}}(\Hs_{A} \otimes \Hs_B )$, we must have
    \begin{equation} \label{eq: local commutativity}
        \uruleA_{AB} \left( P_x^A, \uruleA_{AB} \left( P_y^B, \rho_{AB} \right) \right) = \uruleA_{AB} \left( P_y^B, \uruleA_{AB} \left( P_x^A, \rho_{AB} \right) \right) \, . 
    \end{equation}
\end{description}

Requirements A1-A6 are the core assumptions feeding into the framework of generalised state-update theories. Other features of the \Luders rule of quantum theory will not necessarily be shared by generalised update rules. For example, the resulting foil theories do not have to be deterministically repeatable, and they do not have to allow for quantum measurements to preserve the indistinguishability of preparations (cf.\ Sec.\ \ref{sec: proper vs improper}). As we will see in Sec.\ \ref{sec: examples}, there are generalised state-update theories (GUTs) in which different preparations of the same mixed state do not remain indistinguishable after a measurement. Thus, depending on the update rule, density operators may or may not provide a complete description of an individual system.

We are now in a position to present the definition of generalised state-update rules.

\begin{defn} \label{def: update rule}
   A \emph{generalised state-update rule} $\uruleA$ is a set of functions $\{ \uruleA_{AB} \}$---where each function $\uruleA_{AB}$ is associated with a pair of Hilbert spaces $\Hs_A$ and $\Hs_B$---satisfying the following properties: (\textit{i}) consistency with the Born rule (Eq.\ \eqref{eq: Born rule consistency}); (\textit{ii}) 1-homogeneity (Eq.\ \eqref{eq: 1-homogeneity}); and (\textit{iii}) assumptions A1-A6 (Eqs.\ \eqref{eq: completeness & context-independence}-\eqref{eq: local commutativity}), e.g.\ $\uruleA_{AB}: \mathcal{P}(\Hs_A) \times \bar{\mathcal{S}}(\Hs_A \otimes \Hs_B) \to \bar{\mathcal{S}}(\Hs_A \otimes \Hs_B)$ etc.
\end{defn}
Any update rule chosen to replace the projection postulate of quantum theory will give rise to one specific generalised state-update theory. Multiple update rules may induce the same behaviour on non-composite systems via Eq.\ \eqref{eq: single rule from composite rule}. This point is illustrated in Sec.\ \ref{Correlation-freeQM} where different extensions of the single-system \Luders rule to multi-partite systems are presented.

How do the update rules defined here compare with \textit{quantum instruments} \cite{heinosaari_mathematical_2011}, i.e.\ collections $\{\mathcal{I}_x\}_x$ of \textit{quantum operations}\footnote{Quantum operations are linear, completely positive and trace non-increasing maps defined on the space $\mathcal{L}(\Hs)$ of bounded operators acting on $\Hs$. Quantum instruments are usually defined as mappings from an outcome space $\left( X, \Sigma \right)$ to the set of quantum operations. However, since we will only deal with discrete observables, an instrument is completely determined by the finite set of operations $\{\mathcal{I}_x\}_x$.} that describe the effect on states of quantum measurements? The main difference is that the state-update rules satisfying our definition are \textit{not} required to be linear or completely positive. Consequently, the set of alternative state-update rules is larger than the set of quantum instruments.

\subsection{Updating (im-)proper mixtures} \label{sec: proper vs improper}

For a pure joint state $\rho_{AB}$, Eq.\ \eqref{eq: single rule from composite rule} applies irrespective of $\rho_{AB}$ being a product state or being entangled. Accordingly, $\uruleA_{A} \left( P_x^A, \rho_{A} \right)$ represents the post-measurement state of system $\Hs_A$ when $\rho_{A}$ is either a pure state or an \textit{improper} mixture, i.e.\ the reduced density operator of an entangled state.
In contrast, \textit{proper} mixtures result from classical uncertainty about the ``true'' preparation of a system. A proper mixture can be thought of as a \textit{Gemenge} (German for `mixture'), i.e.\ as a collection $\mathcal{G}=\{(p_1,\rho_1), (p_2, \rho_2),\dots, (p_n,\rho_n) \}$ where $p_i\geq 0$, $\sum_i p_i=1$ and $\rho_i \in \bar{\mathcal{S}}(\Hs)$ represent pure states or improper mixed states. The Gemenge $\mathcal{G}$ describes a system that is prepared in state $\rho_i$ with probability $p_i$ \cite{busch_quantum_1996}. To any Gemenge there corresponds a unique density operator, $\rho_{\mathcal{G}} = \sum_{i=1}^n p_i \rho_i$.

If a system described by $\mathcal{G}$ is measured with outcome $P_x^A$, the updated description is given by the Gemenge $\mathcal{G}_x$ with the same weights $p_i$ but with updated (sub-normalised) states, $\mathcal{G}_x = \{ p_i, \uruleA_A (P_x^A, \rho_i) \}$. In other words, if the system resided in the state $\rho_i$ with probability $p_i$ before the measurement with outcome $P_x^A$, it will reside in the state $\uruleA_A (P_x^A, \rho_i)$ with probability $p_i$ after the measurement.\footnote{If the outcome $P_x^A$ is never observed, $\text{Tr}(P_x^A \rho_i)=0$, then $\uruleA_A (P_x^A, \rho_i)=O$, where $O$ is the zero operator, and the experimenter learns that the system was \textit{not} prepared in the state $\rho_i$.} Thus, the density matrix updates according to the rule
\begin{equation} \label{eq: update rule on proper mixture}
    \rho_{\mathcal{G}} = \sum_{i=1}^n p_i \rho_i \quad \stackrel{P_x^A}{\loongmapsto} \quad \rho_{\mathcal{G}_x} = \sum_{i=1}^n p_i \; \uruleA_{A} \left( P_x^A, \rho_i \right) \, .
\end{equation}
A similar expression applies for local measurements on a subsystem when the composite system is described by a proper mixture, i.e.\ $\mathcal{G}=\{(p_1,\rho^1_{AB}), (p_2, \rho^2_{AB}),\dots, (p_n,\rho^n_{AB}) \}$.

The update rule is therefore applied differently depending on whether a mixed state is regarded as a proper or an improper mixture. As shown in \eqref{eq: update rule on proper mixture}, in the case of a proper mixture, the map acts on each operator $\rho_i$ appearing in the decomposition of $\rho_{\mathcal{G}}$ and included in the associated Gemenge $\mathcal{G}$. Typically, the post-measurement state $\rho_{\mathcal{G}_x}$ also represents a proper mixture. Consequently, sequences of measurements on a proper mixture are described by repeatedly applying the update rule to each pure or improper mixed state $\rho_i$ in the initial Gemenge $\mathcal{G}$. 

The equations \eqref{eq: completeness & context-independence}-\eqref{eq: local commutativity} formalising assumptions A1-6 were formulated with pure and improper mixed states in mind. Nevertheless, they also ensure a consistent treatment of measurements on proper mixtures. For instance, the completeness A1 of $\uruleA$ guarantees that $\rho_{\mathcal{G}_x}$ in \eqref{eq: update rule on proper mixture} is a valid state of the system. Similar arguments apply to assumptions A3-6, but not to context-independence A2, which requires minor adaptation. In fact, the post-measurement state $\rho_{\mathcal{G}_x}$ generally does not depend on the initial mixed state $\rho_{\mathcal{G}}$, but rather on the initial Gemenge $\mathcal{G}$. When proper mixtures are included, context-independence is generalised as follows: the observed outcome and the pre-measurement Gemenge alone determine the post-measurement state.\footnote{If $\rho_A$ denotes either a pure or an improper mixed state, we can associate the trivial Gemenge $\mathcal{G} = \{ (1, \rho_A) \}$, describing a system prepared in the state $\rho_A$ with probability $1$.}

By considering proper and improper mixtures separately, we allow the framework to include foil theories where sequential measurements can be used to distinguish between different realisations of the same mixed state. Unlike quantum theory, such theories do not generally preserve the indistinguishability of preparations. In addition, our framework should also include update rules that send pure states to proper mixtures, thus introducing another level of uncertainty into the behaviour of a measuring device. In such cases, equations such as \eqref{eq: local commutativity} for local commutativity A6, which involve sequential measurements, must be adjusted by ensuring that the second update rule is applied to each state in the Gemenge resulting from the first measurement.

\subsection{Foil theories from generalised state-update rules} \label{sec: Sec with def GUTs}

Having defined update rules in Def.\ \ref{def: update rule}, we now introduce GUTs.

\begin{defn} \label{def: AMT}
A \emph{Generalised state-Update Theory} (GUT) is defined by the quantum postulates for states ($\mathsf{S}$), time evolution ($\mathsf{T}$), system composition ($\mathsf{C}$), observables ($\mathsf{O}$), outcome probabilities ($\mathsf{B}$), and by a generalised state-update rule $\uruleA$ assigning post-measurement states. 
\end{defn}

\section{Examples of generalised state-update rules} \label{sec: examples}

We now introduce several examples of update rules and investigate properties they have in addition to the essential requirements A1-A6 of Sec.\ \ref{subsec: assumptions we make}. Each of the hypothetical rules defines a foil of quantum theory through Def.\ \ref{def: AMT} in Sec.\ \ref{sec: Sec with def GUTs}, characterised by a modified projection postulate.

\subsection{\Luders measurements: quantum theory} \label{sec: quantum theory (extensions)}
Our framework includes quantum theory, since the properties A1-A6 underlying Def.\ \ref{def: update rule} for generalised update rules have been abstracted from the \Luders rule,
\begin{equation} \label{eq: Lueders rule full}
    \uruleL_{AB}(P_x^A, \rho_{AB}) = P_x^A \otimes \id_B \rho_{AB} P_x^A \otimes \id_B \, .
\end{equation}

In Sec.\ \ref{LuedersOnCompSys}, we described an important property of the \Luders rule, that we will call \textit{composition compatibility}. A \textit{local} measurement on subsystem $\Hs_A$, implemented by a device $\mathcal{D}_A$ and yielding the outcome $P_x^A$, leads to the same post-measurement state of the composite system as a \textit{global} measurement on $\Hs_A \otimes \Hs_B$, implemented by a device $\mathcal{D}_{AB}$ and yielding the outcome represented by $P_x^A \otimes \id_B$.

\begin{defn} [\emph{Composition Compatibility}] \label{def: composition compatibility}
   An update rule $\uruleA$ satisfies \emph{composition compatibility} if, for every $\Hs_A$, $\Hs_B$, joint state $\rho_{AB}\in \bar{\mathcal{S}}(\Hs_A \otimes \Hs_B)$ and local outcome $P_x^A \in \mathcal{P}(\Hs_A)$,
   \begin{equation} \label{eq: def CC}
       \uruleA_{AB} \left( P_x^A, \rho_{AB} \right) = \uruleA_{AB} \left( P_x^A \otimes \id_B, \rho_{AB} \right) \, .
   \end{equation}
\end{defn}
We will see that not all update rules satisfy this property---in other words, it does not follow from assumptions A1-A6. Note that the two expressions in Eq.\ \eqref{eq: def CC} capture two different situations: the function on the left-hand side is defined on $\mathcal{P}(\Hs_A) \times \bar{\mathcal{S}}(\Hs_A \otimes \Hs_B)$, while that on the right-hand side on $\mathcal{P}(\Hs_A \otimes \Hs_B) \times \bar{\mathcal{S}}(\Hs_A \otimes \Hs_B)$. In physical terms, one side of Eq.\ \eqref{eq: def CC} corresponds to a \textit{local} measurement on $\Hs_A$, while the other side is associated with a \textit{global} measurement on $\Hs_A \otimes \Hs_B$. Eq.\ \eqref{eq: def CC} refers to measurements on $\Hs_A$ only, but the property is assumed to hold for measurements on any subsystem.

Composition compatibility \eqref{eq: def CC} can therefore be regarded as the additional assumption that extends the ``operationally incomplete'' postulate ($\mathsf{L}$) for single systems, i.e.\
\begin{equation} \label{eq: Luders for single systems}
    \uruleL_A(P_x^A, \rho_A)=P_x^A \rho_A P_x^A \, ,
\end{equation}
to the \Luders rule $\uruleL_{AB}$ for multi-partite systems, Eq.\ \eqref{eq: Lueders rule full}, which encompasses the effects of both local and global measurements.

Another argument to extend the single-system \Luders rule to composite systems uses the fact that $\uruleL_{A,x}(\cdot) \equiv \uruleL_{A}(P^A_x,\cdot): \bar{\mathcal{S}}(\Hs_A) \to \bar{\mathcal{S}}(\Hs_A)$, which denotes the \Luders rule for system $\Hs_A$ conditioned on the measurement outcome $P_x^A$, is convex-linear over $\bar{\mathcal{S}}(\Hs_A)$. Therefore, $\uruleL_{A,x}$ has a unique linear extension to the set of bounded operators $\mathcal{L}(\Hs_A)$ (cf.\ \cite{heinosaari_mathematical_2011}). The \Luders rule \eqref{eq: Lueders rule full} for the composite system $\Hs_A \otimes \Hs_B$, conditioned on the local measurement outcome $P_x^A$, is then obtained by setting 
\begin{equation} \label{eq: extension Luders (CP)}
    \uruleL_{AB} \left( P_x^A, \rho_{AB} \right) = \left(\uruleL_{A,x} \otimes \mathcal{I}_B \right) \left( \rho_{AB} \right) \,
\end{equation}
for all $\rho_{AB} \in \bar{\mathcal{S}}(\Hs_A \otimes \Hs_B)$, where $\left(\uruleL_{A,x} \otimes \mathcal{I}_B \right) \left( L_A \otimes L_B \right) = \uruleL_{A,x} \left( L_A  \right) \otimes L_B$ for all $L_A \in \mathcal{L}(\Hs_A)$ and $L_B \in \mathcal{L}(\Hs_B)$. Here, $\mathcal{I}_B$ denotes the identity channel on $\Hs_B$. Eq.\ \eqref{eq: extension Luders (CP)} is well-defined, as well as linear over $\bar{\mathcal{S}}(\Hs_A \otimes \Hs_B)$, because $\uruleL_{A,x}$ can be linearly extended to $\mathcal{L}(\Hs_A)$. In particular, the completeness A1 of $\uruleL_{AB} \left( P_x^A, \rho_{AB} \right)$ follows from the \textit{complete positivity} of $\uruleL_{A,x}$: if $\uruleL_{A,x}$ was not completely positive, then the operator $\uruleL_{AB} \left( P_x^A, \rho_{AB} \right)$ would not be positive for some $\rho_{AB}$.

Typically, generalised update rules have to neither satisfy deterministic repeatability nor composition compatibility. They may also lack some other properties such as \textit{preparation indistinguishability}, which ensures that different preparations of the same mixed state remain indistinguishable after measurements. This property is equivalent to the statement that ``density operators provide complete descriptions of quantum systems''. Moreover, \textit{ideality} (measurement leaves a state unchanged if the outcome is certain) and \textit{local tomography} (local measurements of $M_A$ and $M_B$ on different subsystems are operationally equivalent to a global measurement of the product observable $M_A \otimes M_B$ if the experimenters are allowed to communicate classically) may or may not be present. The \Luders rule is also endowed with a form of \textit{coherence}, according to which the outcome probability distribution of an observable $M$ is not affected by a prior coarse-grained measurement of $M$ (see Def.\ \ref{def: coherence} in Sec.\ \ref{sec: singling out Lueders}). Table \ref{table: properties of valid update rules} summarises the extent to which these and other properties hold for the hypothetical update rules introduced in the following sections. 

\begin{table}[ht]
    \begin{center}
        \begin{tabular}{lccccc}
        
           & $\uruleL$ & $\uruleA^{\mathsf{locL}}$ & $\uruleP$ & $\uruleA^{\mathsf{dep}}$ & $\uruleA^{\lambda}$ \\ \hline
          deterministic repeatability & \checkmark & \checkmark & $\times $ & $\times$ & $\times$ \\ \hline
          preparation indistinguishability & \checkmark & (\checkmark) & $\times$& \checkmark & $\times$ \\ \hline
          composition compatibility & \checkmark & $\times$ & \checkmark & $\times$& $\times$ \\ \hline
          ideality & \checkmark & (\checkmark) & \checkmark & $\times$& (\checkmark) \\ \hline
          local tomography & \checkmark & $\times$& $\times$& \checkmark & $\times$\\ \hline
          coherence & \checkmark & \checkmark & $\times$& $\times$& $\times$\\ \hline
          nonlocality & \checkmark & $\times$& $\times$& \checkmark & $\times$\\ \hline
          complete positivity & \checkmark & $\times$& $\times$& \checkmark & $\times$\\ \hline
          weak repeatability & \checkmark & \checkmark & $\times$& $\times$& \checkmark \\ \hline
        \end{tabular}
    \end{center}
    \caption{Summary of the operational properties of the update rules discussed in Sec.\ \ref{sec: examples}. Ticks in parentheses indicate that the property holds for measurements on \textit{non-composite} systems only.}
    \label{table: properties of valid update rules}
\end{table}

\subsection{Locally-\Luders measurements: correlation-free quantum theory} \label{Correlation-freeQM}
As noted in Sec.\ \ref{AURules}, the \Luders rule for single systems does not fix the state-update rule in multi-partite systems. Consequently, there will exist other update rules for composite systems that reduce to the single-system \Luders rule.
One example is the 1-homogeneous map
\begin{equation} \label{eq: correlation-free rule}
    \urule_{AB}^{\mathsf{locL}}\left(P_x^A,\rho_{AB}\right)=\frac{\uruleL_{A}\left(P_x^A, \text{Tr}_{B} \left(\rho_{AB}\right)\right) \otimes \text{Tr}_{A}\left(\rho_{AB}\right)}{\text{Tr}\left(\rho_{AB}\right)}
\end{equation}
which satisfies assumptions A1-A6 and thus defines a valid update rule. While manifestly different from the standard \Luders rule of Eq.\ \eqref{eq: Lueders rule full}, it assigns the same post-measurement states as postulate ($\mathsf{L}$) to non-composite systems. Using Eq.\ \eqref{eq: single rule from composite rule} and letting $\rho_A = \text{Tr}_B (\rho_{AB})$, we obtain
\begin{equation}
    \urule_{A}^{\mathsf{locL}}\left(P_x^A,\rho_{A}\right) = \text{Tr}_B \left( \urule_{AB}^{\mathsf{locL}} \left( P_x^A , \rho_{AB} \right) \right) = \urule_{A}^{\mathsf{L}}\left(P_x^A,\rho_{A}\right) \, .
\end{equation}
In other words, $\urule^{\mathsf{locL}}$ represents an alternative extension of postulate ($\mathsf{L}$) but, unlike the \Luders rule $\uruleL$, it does \textit{not} satisfy composition compatibility, and is \textit{not} convex-linear over pre-measurement states of the \textit{composite} system.

The GUT defined by the update rule \eqref{eq: correlation-free rule} is therefore \textit{locally indistinguishable} from quantum mechanics but creates no correlations between entangled systems. Regardless of the observed outcome, a local measurement on subsystem $A$ leaves the reduced state of subsystem $B$ unchanged. Although measurement breaks the entanglement between two systems---by always mapping $\rho_{AB}$ to a product state---distant parties will observe no correlation between their respective measurement outcomes. In such \textit{correlation-free} variant of quantum theory, Bell's inequalities cannot be violated, and entanglement is not available as a resource for distributed tasks.

\subsection{Non-collapsing measurements: passive quantum theory}  \label{sec: pQT}

We will now consider measurements that, instead of updating the quantum state, leave it unchanged.\footnote{The possibility of non-collapsing measurements was listed by von Neumann \cite{vonneumann_mathematische_1932} as early as 1932 as one of three possible reactions of a physical system to a measurement.} Outcomes still occur probabilistically as dictated by the Born rule, yet, in analogy with classical theory, these measurements do not disturb the pre-measurement state. This situation corresponds to a ``\textit{passive}'' update rule $\uruleP_A$ for non-composite systems,
\begin{equation} \label{eq: passive rule single systems} \uruleP_A \left( P_x^A, \rho_A \right) = \text{Tr} \left( P_x^A \rho_A \right) \rho_A \,. 
\end{equation}
Non-collapsing measurements, originally introduced in \cite{aaronson_space_2016}, cannot be implemented in quantum theory \cite{busch_is_1997}. Indeed, $\uruleP_A$ is not convex-linear over $\bar{\mathcal{S}}(\Hs_A)$ which implies that, after a passive measurement, one may be able to distinguish between different preparations of the same mixed state. For example, let $\Hs_A=\mathbb{C}^2$ and consider the Gemenge 
\begin{equation}
\mathcal{G}=\{(1/2, \kb{0}{0}), (1/2, \kb{1}{1})\} \quad \mbox{and} \quad \mathcal{G}^{\prime}=\{(1/2, \kb{+}{+}), (1/2, \kb{-}{-})\}\,,
\end{equation} where $\ket{\pm}=(\ket{0} \pm \ket{1})/\sqrt{2}$. Both preparations correspond to the maximally mixed state, $\rho_{\mathcal{G}}=\rho_{\mathcal{G}^{\prime}}= \id_2/2$. According to the Gemenge-update rule \eqref{eq: update rule on proper mixture} in Sec.\ \ref{sec: proper vs improper}, a passive measurement with outcome $P_x^A=\kb{0}{0}$ will produce two different sub-normalised density operators,
\begin{equation}
    \rho_{\mathcal{G}_x} = \frac{1}{2} \kb{0}{0} \quad \neq \quad \rho_{\mathcal{G}^{\prime}_x} = \frac{\id_2}{4} \, .
\end{equation}
Therefore, the non-collapsing update rule $\uruleP_A$ does not satisfy preparation indistinguishability, indicating that density operators do not provide complete descriptions of individual systems.

While complete positivity can be defined for both linear and nonlinear transformations \cite{ando_non-linear_1986}, it has been described as ``physically unfitting'' for nonlinear dynamics \cite{czachor_complete_1998}. In fact, although $\uruleP_A$ (or, more precisely, extensions of $\uruleP_A$ to $\mathcal{L}(\Hs_A)$) is not completely positive, the update rule can still be consistently extended to composite systems. Invoking compositional compatibility (Def.\ \ref{def: composition compatibility}), we set
\begin{equation} \label{eq: passive rule composite systems}
        \uruleP_{AB} \left( P_x^A, \rho_{AB} \right) = \uruleP_{AB} \left( P_x^A \otimes \id_B, \rho_{AB} \right) = \text{Tr} \left( P_x^A \otimes \id_B \rho_{AB} \right) \rho_{AB} \, .
\end{equation}
The map $\uruleP$ satisfies all the conditions in Def.\ \ref{def: update rule} and defines a GUT in which measurements do not modify the state of either the measured system or of the larger composite system to which it belongs. We called this theory \textit{passive quantum theory} (pQT).
An overview of pQT and its key properties has been provided in \cite{fiorentino_quantum_2023}. 

The absence of any measurement-induced state update allows, in principle, for tomographic reconstruction of the state of an individual system. Thus, the cloning of arbitrary states becomes possible and makes computation in pQT more efficient (in some sense) than in standard quantum theory \cite{busch_is_1997, kent_nonlinearity_2005}. The passive update rule of pQT also satisfies ideality but is not locally tomographic and does not allow for violations of Bell's inequalities (cf.\ Table \ref{table: properties of valid update rules}).

\subsection{Depolarising measurements} \label{sec: depolarising measurements}
Next, consider a theory in which measurements disturb the system to the point that the post-measure\-ment state contains no information about the original state. Repeating measurements on the same system, therefore, extracts no additional information about it. 

Updating the pre-measurement state to the maximally mixed state, regardless of the observed outcome, provides one possible realisation of such a theory,
\begin{equation} \label{eq: depolarising rule single systems}
    \urule^{\mathsf{dep}}_A \left( P_x^A, \rho_A \right) = \text{Tr} \left( P_x^A \rho_A  \right) \frac{\id_{A}}{d_A} \, ,
\end{equation}
where $d_A= \text{dim} (\Hs_A)$. The map in Eq.\ \eqref{eq: depolarising rule single systems} is both convex-linear over $\bar{\mathcal{S}}(\Hs_A)$ and completely positive. The set $\{ \urule^{\mathsf{dep}}_A (P_x^A) \}_x$ represents, in fact, a quantum instrument\footnote{Specifically, it belongs to the class of \textit{trivial instruments}.} compatible with the observable $\{P_x^A \}_x$.

Following Eq.\ \eqref{eq: extension Luders (CP)}, by defining $\urule^{\mathsf{dep}}_{A,x}(\cdot) \equiv \urule^{\mathsf{dep}}_{A}(P^A_x,\cdot)$, we can extend $\urule^{\mathsf{dep}}_A$ to composite systems via
\begin{equation} \label{eq: depolarising rule composite systems}
    \urule^{\mathsf{dep}}_{AB} \left( P_x^A, \rho_{AB} \right) = \left( \urule^{\mathsf{dep}}_{A,x} \otimes \mathcal{I}_B \right) \left( \rho_{AB} \right) \, .
\end{equation}
In contrast to the \Luders rule, the \textit{depolarising rule} $\urule^{\mathsf{dep}}$ does not satisfy composition compatibility. To see this, consider $\rho_{AB}= \kb{00}{00} \in \mathcal{S}(\Hs_2 \otimes \Hs_2)$ and $P_x^A=\kb{0}{0}$. We find that
\begin{equation} \label{eq: depolarising rule does not satisfy CC}
    \urule^{\mathsf{dep}}_{AB} \left( \kb{0}{0}, \kb{00}{00} \right) = \frac{\id_2}{2} \otimes \kb{0}{0} \quad \neq \quad \frac{\id_2}{2} \otimes \frac{\id_2}{2} = \urule^{\mathsf{dep}}_{AB} \left( \kb{0}{0} \otimes \id_B, \kb{00}{00} \right) \, .
\end{equation}
This example also demonstrates how composition compatibility could not have been used to extend $\urule^{\mathsf{dep}}_A$ to composite systems. The right-hand side of Eq.\ \eqref{eq: depolarising rule does not satisfy CC} cannot represent the post-measurement state of the composite system following a local measurement on a single qubit, as it would violate the no-signalling condition A5. See Table \ref{table: properties of valid update rules} for further properties of depolarising measurements.

\subsection{Probability-amplifying projective measurements} \label{sec: partially repeatable measurements}
Suppose an observable is measured twice on the same system in quick succession. Being deterministically repeatable, quantum theory predicts that we will observe the same outcome for the second measurement with certainty. We can construct update rules that satisfy a weaker notion of repeatability, called \textit{weak repeatability}: the probability of observing an outcome for the second time is \textit{greater} than the probability of observing it for the first time. Consider the following one-parameter family of update rules for non-composite systems:
\begin{equation} \label{eq: PRTs update rule}
    \urule^{\lambda}_A \left( P_x^A, \rho_A \right) = \tr{\left(  P_x^A  \rho_A \right)} \frac{G_{\lambda}^{1/2} \left( P_x^A \right)\, \rho_A \, G_{\lambda}^{1/2} \left( P_x^A \right)}{\tr{\left( G_{\lambda} \left( P_x^A \right)\,  \rho_A \right)}} \, ,
\end{equation}
where
\begin{equation} \label{eq: operator in PRTs}
     G_{\lambda} \left( P_x^A \right) =  \left( 1 - \lambda \right) P_x^A + \lambda \id_A \, , \quad \lambda \in [0,1] \, .
\end{equation}
If a measurement outcome $P_X^A = \sum_{i \in X}\kb{i}{i}$ for $X \subseteq \{ 0,...,d_A-1 \}$ is obtained given the pure state $\ket{\psi} = \sum_{i=0}^{d_A-1} c_i \ket{i} \in \Hs_A$, the resulting post-measurement state reads
\begin{equation} \label{eq: action of PRT rule}
    \urule^{\lambda}_A \left( P_X^A, \kb{\psi}{\psi} \right) =  \tr{\left(  P_x^A  \rho_A \right)} \kb{\psi_X}{\psi_X} \, ,
\end{equation}
where 
\begin{equation} \label{eq: pure pm state PRTs}
    \ket{\psi_X} = \frac{1}{N} \left( \sum_{j\in X} c_j \ket{j} + \sqrt{\lambda} \sum_{i \notin X} c_i \ket{i} \right)\, ,
\end{equation}
with $N$ being the normalisation factor. Therefore, the update rule $\urule^{\lambda}_A$ \textit{projects} $\ket{\psi}$ to the pure state $\ket{\psi_X}$ and the parameter $\lambda$ determines how likely a second measurement will yield the same outcome as the first, with a lower value of $\lambda$ corresponding to a higher probability of repeating the first outcome. The \Luders rule $\uruleL_A$ and the passive rule $\uruleP_A$ for non-composite systems are recovered by setting $\lambda=0$ and $\lambda=1$, respectively. Thus, for non-composite systems, the update rule $\urule^{\lambda}_A$ indeed smoothly interpolates between quantum theory and pQT. Weak repeatability occurs for $\lambda \in [0,1)$. The map in Eq.\ \eqref{eq: PRTs update rule} is not convex-linear over $\bar{\mathcal{S}}(\Hs_A)$ for $\lambda\neq 0$, hence the corresponding theories will not satisfy preparation indistinguishability.

Following Eq.\ \eqref{eq: correlation-free rule}, we find that a valid extension of Eq.\ \eqref{eq: PRTs update rule} to composite systems is provided by
\begin{equation} \label{eq: extension PRT rule}    \urule_{AB}^{\lambda}\left(P_x^A,\rho_{AB}\right)=\frac{\urule_{A}^{\lambda}\left(P_x^A, \text{Tr}_{B} \left(\rho_{AB}\right)\right) \otimes \text{Tr}_{A}\left(\rho_{AB}\right)}{\text{Tr}\left(\rho_{AB}\right)} \, .
\end{equation}
It can be shown that Eq.\ \eqref{eq: extension PRT rule} satisfies all conditions outlined in Def.\ \ref{def: update rule}. For each $\lambda\in (0,1)$, the \textit{probability-amplifying projective rule} $\urule^{\lambda}$ defines a \textit{weakly repeatable} GUT that is manifestly different from quantum theory.\footnote{Setting $\lambda=0$ recovers the locally-\Luders rule $\urule_{AB}^{\mathsf{locL}}$ of Eq.\ \eqref{eq: correlation-free rule}.} While alternative extensions of $\urule^{\lambda}_A$ may in principle be constructed, none would satisfy composition compatibility (Def.\ \ref{def: composition compatibility}). Similarly to the depolarising rule $\urule^{\mathsf{dep}}$, it is easy to show that the extension provided by Eq.\ \eqref{eq: def CC} violates no-signalling A5, as well as local commutativity A6.

\subsection{Invalid state-update rules} \label{sec: ruled out theories}

It is instructive to consider examples of state-update rules that \textit{violate}  at least one of the basic requirements A1-A6 of Def.\ \ref{def: update rule} in Sec.\ \ref{subsec: assumptions we make}. We have, in fact, already acknowledged two of them when presenting depolarising measurements (Sec.\ \ref{sec: depolarising measurements}) and probability-amplifying projective measurements (Sec.\ \ref{sec: partially repeatable measurements}). Extending the non-composite rules $\urule^{\mathsf{dep}}_A$ \eqref{eq: depolarising rule single systems} and $\urule^{\lambda}_A$ \eqref{eq: PRTs update rule} via compositional compatibility (Def.\ \ref{def: composition compatibility}) leads to maps assigning post-measurement states to composite systems in violation of the (quantum) no-signalling principle A5.

Table \textcolor{modification}{\ref{table: invalid update rules}} summarizes the properties of the update rules we will consider in this section.
\begin{table}[ht]
    \begin{center}
        \begin{tabular}{lccccc}
        
           & $\tilde{\uruleA}^{\mathsf{dep}}$ & $\tilde{\uruleA}^{\lambda}$ & $\uruleA^{\mu}$ & $\uruleA^{\mathsf{vN}}$ & $\uruleA^{\mathsf{U}}$ \\ \hline
          A1: completeness& \checkmark & \checkmark & \checkmark & \checkmark & \checkmark \\ \hline
          A2: context-independence& \checkmark & \checkmark & \checkmark & $\times$& \checkmark \\ \hline
          A3: local covariance& \checkmark & \checkmark & \checkmark & \checkmark & $\times$\\ \hline
          A4: self-consistency& \checkmark & \checkmark & \checkmark & \checkmark & \checkmark \\ \hline
          A5: no-signalling& $\times$& $\times$& \checkmark & \checkmark & \checkmark \\ \hline
          A6: local commutativity& \checkmark & $\times$& $\times$& \checkmark & \checkmark \\ \hline
        \end{tabular}
    \end{center}
    \caption{Summary of the state-updates described in Sec.\ \ref{sec: ruled out theories} that do not define an update rule as per Def.\ \ref{def: update rule}. For each rule, we record satisfied/violated assumptions using ticks and crosses, respectively. The first two, $\tilde{\uruleA}^{\mathsf{dep}}$ and $\tilde{\uruleA}^{\lambda}$, denote the extensions to composite systems of $\uruleA^{\mathsf{dep}}_A$ \eqref{eq: depolarising rule single systems} and $\uruleA^{\lambda}_A$ \eqref{eq: PRTs update rule}, respectively, obtained via composition compatibility (Def.\ \ref{def: composition compatibility}).}
    \label{table: invalid update rules}
    \end{table}

Consider now a convex mixture of the \Luders $\uruleL$ and the passive $\uruleP$ update rules for composite systems,
\begin{equation} \label{eq: convex mix luders and passive}
    \uruleA^{\mu}_{AB} \left( P_x^A, \rho_{AB} \right) = \left( 1- \mu \right) \uruleL_{AB} \left( P_x^A, \rho_{AB} \right) + \mu \, \uruleP_{AB} \left( P_x^A, \rho_{AB} \right), \quad \quad \mu \in (0,1)\, ,
\end{equation}
where the right-hand side describes a proper mixture. Eq.\ \eqref{eq: convex mix luders and passive} describes a local measurement that, with some \textit{a priori} probability $(1- \mu)$, collapses the state of the composite system as prescribed by quantum theory, or, with probability $\mu$, leaves the state unchanged. For $\mu \in (0,1)$, the probability of observing outcome $P^A_x$ in a possible second measurement is, on average, higher than in the first measurement. In other words, \eqref{eq: convex mix luders and passive} provides an alternative way to satisfy weak repeatability, different from the probability-amplifying projections $\urule^{\lambda}$ introduced in Sec.\ \ref{sec: partially repeatable measurements}.

The ``real'' post-measurement state of the system is either $\uruleL_{AB} \left( P_x^A, \rho_{AB} \right)$ or $\uruleP_{AB} \left( P_x^A, \rho_{AB} \right)$; one can imagine that this is determined by the value of some hidden variable $\lambda \in \{0,1\}$. The probability distribution of this hidden variable is the same across all experimental runs---i.e.\ $\text{prob}(\lambda=0)=1-\mu$, and $\text{prob}(\lambda=1)=\mu$---ensuring that it is independent of the context of a particular experiment. This guarantees that $\uruleA^{\mu}$ satisfies context-independence A2. Moreover, \eqref{eq: convex mix luders and passive} can be shown to satisfy completeness A1, local covariance A3, self-consistency A4 and no-signalling A5, since $\uruleA^{\mu}$ is a convex mixture of mappings that individually satisfy these conditions. Surprisingly, however, we find that \eqref{eq: convex mix luders and passive} violates local commutativity A6, despite both $\uruleL$ and $\uruleP$ satisfying it. 

The violation can be seen in a simple example. Let $\rho_{AB}=\kb{\Phi^+}{\Phi^+} \in \mathcal{S}(\Hs_2 \otimes \Hs_2)$, where $\ket{\Phi^+} = (\ket{00}+ \ket{11})/\sqrt{2}$ is a maximally entangled state, and consider $P_x^A= \kb{0}{0}$, and $P_y^B = \kb{1}{1}$. Recall that $\uruleA^{\mu}$ maps pure states to proper mixtures; hence, when representing sequential measurements, the second map must be applied to each pure state in the corresponding Gemenge, in line with the discussion in Sec.\ \ref{sec: proper vs improper}. If $P^A_x$ is obtained in a local measurement on $\Hs_A$ before the local measurement of $\Hs_B$ yields $P^B_y$, the state of the composite system updates as
\begin{equation}
    \kb{\Phi^+}{\Phi^+} \, \stackrel{(x \prec y)}{\loongmapsto} \, \frac{\mu}{4} \left( \left( 1- \mu \right) \kb{11}{11} + \mu \kb{\Phi^+}{\Phi^+} \right) \, .
\end{equation}
If instead the measurement yielding $P^B_y$ precedes the one yielding $P^A_x$, a different update occurs,
\begin{equation}
    \kb{\Phi^+}{\Phi^+} \, \stackrel{(y \prec x)}{\loongmapsto} \, \frac{\mu}{4} \left( \left( 1- \mu \right) \kb{00}{00} + \mu \kb{\Phi^+}{\Phi^+} \right) \, .
\end{equation}
Since the final state depends on the order in which the local measurements are performed, Eq.\ \eqref{eq: convex mix luders and passive} violates local commutativity A6. As a result, correlations between measurements on different subsystems cannot be unambiguously determined.

The projection postulate introduced by von Neumann \cite{vonneumann_mathematische_1932} is a notable example of state-update that does not define a valid update rule. For non-degenerate measurements, von Neumann's and L{\"u}ders' postulates coincide. However, according to von Neumann, outcome degeneracy of quantum measurements---represented by projectors of rank greater than one---arises from classical post-processing of non-degenerate measurements. That is, one does not directly implement a measurement of a degenerate observable $M$; instead, a measurement of a \emph{refinement} of $M$, i.e.\ a non-degenerate observable $M^{\prime}$ commuting with $M$, is carried out, and the outcomes are coarse-grained. Given a degenerate observable, there exist, however, infinitely many refinements. Consequently, the post-measurement state will not only depend solely on the pre-measurement state $\rho_A$ (which we can assume to be pure, without loss of generality) and the observed degenerate outcome $P^A_x$, but also on the specific refinement chosen. More precisely, letting $\delta = \tr{( P^A_x )}$ denote the ``cardinality'' of the degeneracy of outcome $P^A_x$, the updated state depends on the subset $\mathcal{B}_x = \{ P^A_{x_i} \}_{i=1}^{\delta } \subseteq M^{\prime}$ spanning the degenerate subspace $\Hs_x = P^A_x \Hs$. The state-update for non-composite systems thus obeys
\begin{equation} \label{eq: von Neumann rule for degenerate outcomes}
    \uruleA^{\mathsf{vN}}_A \left( P^A_x, \rho_A, \mathcal{B}_x \right) =  \sum_{i=1}^{\delta} \uruleL_A \left( P^A_{x_i}, \rho_A \right) \, ,
\end{equation}
where the right-hand side describes a proper mixture.
As a result, von Neumann's postulate---despite being, like L{\"u}ders' original postulate ($\mathsf{L}$), concerned only with the state-update of the measured system---violates context-independence A2 and, therefore, does not give rise to a valid update rule.\footnote{One could, however, modify \eqref{eq: von Neumann rule for degenerate outcomes} into a \textit{valid} update rule by, for example, fixing a preferred set $\mathcal{B}_x$ for each experiment yielding the outcome $P^A_x$, and subsequently extend the map to composite systems via complete positivity.}

Finally, we present a simple example of an update rule that violates local covariance A3. Assume that any measurement on a system causes the pre-measurement state to undergo a fixed, observer-independent and non-trivial\footnote{If $U=\id$, we recover the passive measurements of Sec.\ \ref{sec: pQT}, which are consistent with A3.} unitary transformation $U \in \mathcal{U}(\Hs)$, 
\begin{equation} \label{eq: unitary update rule}
    \uruleA^{\mathsf{U}}_{AB} \left( P^A_x, \,  \rho_{AB} \right) = \text{Tr}\left( P_x^A \text{Tr}_B \left( \rho_{AB} \right) \right) \, \left(U \otimes \id_B \right) \, \rho_{AB} \, \left( U^{\dagger} \otimes \id_B \right) \, .
\end{equation}
The rule $\uruleA^{\mathsf{U}}_{AB}$ is not locally covariant when using local unitaries $U_A$ that do not commute with $U$: different observers would provide inconsistent descriptions of the same measurement.

\section{How to recover the projection postulate} \label{sec: singling out Lueders}
We have set up a framework of hypothetical update rules that do give rise to consistent foils of quantum theory, and we explored their properties. It is natural to reverse our approach by turning to the question of singling out the \Luders rule of quantum theory among the set of valid updates. The correlation-free version of quantum theory (Sec.\ \ref{Correlation-freeQM})---as well as any suitable modification of von Neumann's projection \eqref{eq: von Neumann rule for degenerate outcomes}---shows that the \Luders rule cannot be recovered uniquely from assuming deterministic repeatability alone. Similarly, passive quantum theory (Sec.\ \ref{sec: pQT}) provides another example of an update rule satisfying ideality.

\Luders presents two main objections to von Neumann's projection postulate \cite{luders_uber_1950}. Firstly, he argues that the post-measurement state should depend only on the outcome and the initial state. This requirement is equivalent to context-independence A2. Secondly, he suggests that 
\begin{quote}
``the measurement of a highly degenerate quantity permits only relatively weak assertions regarding the considered ensemble. For that reason, the resulting change in state should likewise be small'' (\cite{luders_concerning_2006}, p.\ 665).
\end{quote}
In the limit of the \textit{trivial observable} represented by the identity operator $\id$, a measurement reveals no information about the original state and it should have no effect on it. The quoted property describes a form of trade-off between the applied disturbance caused by a measurement and the information acquired from the outcome. The following operational property for non-composite systems, which we term \textit{coherence}, implies the desired behaviour for measurements of the trivial observable.

\begin{defn} [\emph{Coherence}] \label{def: coherence}
    An update rule for non-composite systems $\uruleA_A$ satisfies \textit{coherence} if the outcome probability distribution of any observable $M$ is not affected by a prior measurement of a coarse-graining of $M$ (i.e.\ an observable for which $M$ is a refinement). Mathematically, for every  $P^A_x, P^A_y \in \mathcal{P}(\Hs_A)$ such that $P^A_x \, P^A_y = P^A_y \, P^A_x = P^A_x$, and every state $\rho_A \in \bar{\mathcal{S}}(\Hs_A)$, we must have
    \begin{equation} \label{eq: coherence}
        \tr{\left[ P^A_x \, \uruleA_A \left( P^A_y, \rho_A \right) \right]} = \tr{\left( P^A_x \rho_A \right)} \, .
    \end{equation}
\end{defn}

Consider an infinite ensemble prepared in an arbitrary quantum state, and examine the following two scenarios: $(i)$ a measurement of $M$ is performed on each element of the ensemble; $(ii)$ a measurement of a coarse-graining of $M$ followed by a measurement of $M$ are performed on each element of the ensemble. The property of coherence stipulates that the outcome probabilities observed when measuring $M$ in scenarios $(i)$ and $(ii)$ must be identical.
In fact, Eq.\ \eqref{eq: coherence} ensures that, for all quantum states and outcomes $P^A_x$ and $P^A_y$ such that $P^A_x \, P^A_y = P^A_y \, P^A_x = P^A_x$, the relation 
\begin{equation} \label{eq: probability-based relation for coherence}
    \text{prob}(o_1=P^A_y) \, \text{prob}(o_2=P^A_x|o_1=P^A_y) = \text{prob}(o_1=P^A_x)
\end{equation}
is satisfied, where $o_1$ and $o_2$ denote the first and second outcomes of two consecutive measurements, respectively. Here, $P^A_x$ denotes the outcome of the $M$ measurement, while $P^A_y$ denotes the outcome of the coarse-graining. If $P^A_x \neq P^A_y$, then $\Hs_x \subset \Hs_y$ (where $\Hs_i = P_i^A \Hs$), meaning that $P^A_y$ provides less information about the original state than $P^A_x$. As a result, the effect on the state cannot be arbitrarily strong but must ``preserve coherence'' in the subspace $\Hs_y$, in the sense that $\text{prob}(o_2=P^A_x|o_1=P^A_y) = \text{const.} \times \text{prob}(P^A_x)$ for all $P^A_x \in \mathcal{P}(\Hs_y)$. In the limiting case of $P^A_y = \id_A$, one obtains $\uruleA_A \left( \id_A, \rho_A \right) \propto \rho_A$, in agreement with L{\"u}ders' remark.

In some form or another, the assumption of coherence has appeared already in the literature, both in the context of quantum theory \cite{bell_moral_1966, herbut_compatibility_2004, khrennikov_neumann_2009} and of generalised probabilistic theories \cite{kleinmann_sequences_2014}. For measurements on non-composite quantum systems, coherence entails both \emph{\textit{deterministic repeatability}} \cite{kleinmann_sequences_2014} and \emph{\textit{ideality}}.

\begin{lem} \label{lem: coherence implies DR and ID}
    A coherent update rule for non-composite systems $\uruleA_A$ satisfies both deterministic repeatability and ideality.
\end{lem}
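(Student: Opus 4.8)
The plan is to derive the two claimed properties by making judicious choices of the outcomes $P_x^A$ and $P_y^A$ in the coherence identity \eqref{eq: coherence}, so that both become immediate consequences. The key observation is that coherence is a statement about probabilities, but combined with the Born-rule constraint \eqref{eq: Born rule consistency} it also pins down the post-measurement state itself in the extreme cases.

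\medskip
\noindent\textbf{Deterministic repeatability.} I would first fix a measurement outcome $P_y^A \in \mathcal{P}(\Hs_A)$ and set $P_x^A = P_y^A$ in \eqref{eq: coherence}; note $P_y^A P_y^A = P_y^A$, so the hypothesis $P_x^A P_y^A = P_y^A P_x^A = P_x^A$ is satisfied. Coherence then gives $\tr[P_y^A\, \uruleA_A(P_y^A,\rho_A)] = \tr(P_y^A \rho_A)$, i.e.\ the probability of obtaining $P_y^A$ a second time equals the probability of obtaining it the first time. To upgrade this to deterministic repeatability one must show that \emph{conditioned} on having obtained $P_y^A$, the second outcome is $P_y^A$ with certainty. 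This follows because the (normalised) post-measurement state $\hat\rho$ satisfies $\tr(P_y^A \hat\rho) = \tr[P_y^A\,\uruleA_A(P_y^A,\rho_A)] / \tr(P_y^A\rho_A) = 1$ by the Born-rule constraint \eqref{eq: Born rule consistency} applied to the numerator, and a state with $\tr(P_y^A\hat\rho)=1$ assigns probability one to outcome $P_y^A$ in any later measurement of an observable refining $\{P_y^A, \id - P_y^A\}$. Hence repeating the measurement reproduces the outcome $P_y^A$ with probability $1$.

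\medskip
\noindent\textbf{Ideality.} Here I would take the measured observable to be trivial on the relevant subspace. Suppose $\tr(P_y^A \rho_A) = 1$, i.e.\ the outcome $P_y^A$ is certain. Apply coherence with this $P_y^A$ and an \emph{arbitrary} $P_x^A$ satisfying $P_x^A P_y^A = P_y^A P_x^A = P_x^A$ (equivalently $\Hs_x \subseteq \Hs_y$): \eqref{eq: coherence} yields $\tr[P_x^A\,\uruleA_A(P_y^A,\rho_A)] = \tr(P_x^A \rho_A)$ for every such $P_x^A$. Since $\rho_A$ is supported entirely on $\Hs_y$, the projectors $P_x^A$ with range contained in $\Hs_y$ separate operators supported on $\Hs_y$, so these equalities for all such $P_x^A$ force $\uruleA_A(P_y^A,\rho_A) = \rho_A$ (after noting both sides have trace $1$ by \eqref{eq: Born rule consistency} and \eqref{eq: 1-homogeneity}). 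Thus a measurement whose outcome is certain leaves the state unchanged, which is precisely ideality. The limiting case $P_y^A = \id_A$, mentioned in the text, is the special instance where $\rho_A$ is arbitrary.

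\medskip
\noindent\textbf{Anticipated obstacle.} The routine part is choosing the right projectors; the slightly delicate point is the separation argument in the ideality step---turning ``$\tr[P_x^A X] = \tr[P_x^A Y]$ for all subprojectors $P_x^A$ of $P_y^A$'' into ``$X = Y$'' when $X,Y$ are supported on $\Hs_y$. One way is to use rank-one $P_x^A = \kb{\phi}{\phi}$ with $\ket{\phi}\in\Hs_y$ ranging over an orthonormal basis of $\Hs_y$ together with suitable superpositions, recovering all diagonal and off-diagonal matrix elements; one must be a little careful that coherence as stated only directly constrains $\tr[P_x^A X]$ for \emph{projectors} $P_x^A$, but linearity in $P_x^A$ (or polarisation over a few well-chosen rank-one projectors) bridges the gap. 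I expect this is the only place where more than a one-line argument is needed.
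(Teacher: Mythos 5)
Your proposal is correct and follows essentially the same route as the paper: setting $P_x^A=P_y^A$ in \eqref{eq: coherence} for deterministic repeatability, and then separating $\uruleA_A(P_y^A,\rho_A)$ from $\rho_A$ by projectors (an informationally complete set) on $\Hs_y$ for ideality. The only point to make explicit in the ideality step is that $\uruleA_A(P_y^A,\rho_A)$ itself has support in $\Hs_y$ --- which follows from your repeatability argument and is needed before the separation-by-subprojectors-of-$P_y^A$ argument can conclude equality of the two operators.
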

\begin{proof}
    Consider an arbitrary projector $P_x^A$ and a system residing in some state $\rho_A \in \mathcal{S}(\Hs_A)$ such that $\text{Tr} (P_x^A \rho) \neq 0$. Letting $P_y^A=P_x^A$ in Eq.\ \eqref{eq: coherence} leads to $\text{Tr} \left( P_x^A \, \uruleA_A \left( P_x^A, \rho_A \right) \right) = \text{Tr} \left( P_x^A \rho_A \right) $,
    which implies that, if repeated, the measurement will yield the same outcome with probability $1$. In other words, $\uruleA_A \left( P_x^A, \rho_A \right)$ has support in the subspace $\Hs_x = P_x^A \Hs_A$ only, i.e. $\uruleA_A \left( P_x^A, \rho_A \right) \in \bar{\mathcal{S}}(\Hs_x)$. Therefore, $\uruleA_A$ satisfies deterministic repeatability.

    Following the same argument, we know that $\uruleA_A \left( P_y^A, \rho_A \right)$ has support in $\Hs_y$ only. Let the initial state $\rho_A$ also have support in $\Hs_y$ only, i.e.\ $\text{Tr} (P_y^A \rho_A) = 1$, and consider an informationally complete set of observables on $\Hs_y$. If $\uruleA_A \left( P_y^A, \rho_A \right) \neq \rho_A$, then for at least one of the outcomes of an observable in the set, say $P_x^A \in \mathcal{P}(\Hs_y)$ (which must satisfy $P^A_x \, P^A_y = P^A_x$), we have $\tr{\left( P_x^A  \uruleA_A \left( P_y^A, \rho_A \right) \right)} \neq \tr{\left( P_x^A \rho_A \right)}$, in contradiction with \eqref{eq: coherence}. Therefore, the update rule does \emph{not} disturb the states of measured systems when the outcome is certain, i.e.\ $\uruleA_A$ satisfies ideality.
\end{proof}

Indeed, coherence alone is sufficient to \textit{derive} the single-system \Luders rule $\uruleL_A$, i.e.\ the original projection postulate ($\mathsf{L}$). This argument was first presented in 1966 \cite{bell_moral_1966} and similar versions of it have since been rediscovered several times; see Appendix \ref{sec:historical review Luders} for a summary. Below, we present a simple proof of this result.
\begin{thm} \label{thm: coherence and Luders}
    For non-composite systems, the \Luders rule $\uruleL_A$ is the only coherent update rule.
\end{thm}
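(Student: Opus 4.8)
The plan is to pin down $\uruleA_A(Q,\rho)$ for an arbitrary outcome $Q\in\mathcal{P}(\Hs_A)$ and an arbitrary \emph{normalised} state $\rho\in\mathcal{S}(\Hs_A)$, and then invoke 1-homogeneity to cover sub-normalised inputs. One cannot restrict further to pure $\rho$, since a coherent rule need not be convex-linear, so the argument has to run for general mixed states. Write $\sigma:=\uruleA_A(Q,\rho)$. The idea is that coherence together with Lemma \ref{lem: coherence implies DR and ID} leaves no freedom at all: Lemma \ref{lem: coherence implies DR and ID} fixes \emph{where} $\sigma$ lives, and coherence fixes its matrix elements \emph{there}, and the only operator consistent with both constraints is $Q\rho Q$.

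Concretely, first I would use deterministic repeatability from Lemma \ref{lem: coherence implies DR and ID} to conclude that $\sigma$ has support inside $\Hs_Q:=Q\Hs_A$, i.e.\ $\sigma=Q\sigma Q$, while the Born rule gives $\tr(\sigma)=\tr(Q\rho)$. Next, for every unit vector $\ket v\in\Hs_Q$ the rank-one projector $\kb v v$ is a sub-projector of $Q$ ($\kb v v\,Q=Q\,\kb v v=\kb v v$), so the pair $(\kb v v,\,Q)$ meets the hypothesis of Def.\ \ref{def: coherence}; applying Eq.\ \eqref{eq: coherence} with $P^A_x=\kb v v$ and $P^A_y=Q$ then yields
\[
   \bra v\sigma\ket v=\tr\!\big[\kb v v\,\uruleA_A(Q,\rho)\big]=\tr\!\big(\kb v v\,\rho\big)=\bra v\rho\ket v=\bra v\,Q\rho Q\,\ket v\,,
\]
using $Q\ket v=\ket v$ in the last step. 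Thus $\sigma$ and $Q\rho Q$ are two positive (hence Hermitian) operators, both supported on $\Hs_Q$, whose expectation values coincide on every vector of $\Hs_Q$; viewed as operators on $\Hs_Q$, a Hermitian operator whose quadratic form vanishes on all vectors is zero, so $\sigma=Q\rho Q=\uruleL_A(Q,\rho)$. By 1-homogeneity this extends to all sub-normalised $\rho$, giving $\uruleA_A=\uruleL_A$. For the converse I would simply verify that $\uruleL_A$ is itself coherent: $P^A_xP^A_y=P^A_yP^A_x=P^A_x$ implies $P^A_yP^A_xP^A_y=P^A_x$, so $\tr[P^A_x\,\uruleL_A(P^A_y,\rho)]=\tr[P^A_yP^A_xP^A_y\,\rho]=\tr[P^A_x\rho]$, confirming uniqueness.

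The main obstacle is really just making sure both constraints are brought to bear in the right order: coherence only controls $\tr[P\sigma]$ for projectors $P\le Q$, so on its own it says nothing about components of $\sigma$ orthogonal to $\Hs_Q$ --- precisely the gap that Lemma \ref{lem: coherence implies DR and ID} closes. The remaining point, that agreement of quadratic forms on a subspace forces equality of operators supported there, is routine (polarisation). One should also dispose of the degenerate case $\tr(Q\rho)=0$ separately, but there both $\sigma$ and $Q\rho Q$ are the zero operator, so no extra work is needed.
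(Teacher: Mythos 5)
Your proposal is correct and follows essentially the same route as the paper: Lemma \ref{lem: coherence implies DR and ID} confines the post-measurement state to the subspace $Q\Hs_A$, and coherence applied to sub-projectors of $Q$ then determines the operator within that subspace. The only (cosmetic) difference is that you reconstruct $\sigma=Q\rho Q$ directly from its quadratic form on rank-one projectors via polarisation, whereas the paper argues by contradiction using an informationally complete set of observables on $\Hs_Q$ --- the same tomographic idea in a different wrapper.
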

\begin{proof}
    Substituting $ \uruleL_A$ in \eqref{eq: coherence} leads to
    \begin{equation}
        \tr{\left( P^A_x  P^A_y  \rho_A \, P^A_y \right)} = \tr{\left( P^A_x  \, \rho_A \right)} \, ,
    \end{equation}
    which holds for all $\rho_A$ and $P^A_x \, P^A_y = P^A_y \, P^A_x = P^A_x $. Hence the \Luders projection for single systems is coherent.
    
    To show the converse, suppose $\uruleA_A$ satisfies coherence but $\uruleA_A \neq \uruleL_A$. Then, there must exist some outcome $P^A_y \in \mathcal{P}(\Hs_A)$ and state $\rho_A \in \mathcal{S}(\Hs_A)$ such that $\uruleA_A \left( P^A_y, \rho_A \right) \neq \uruleL_A \left( P^A_y, \rho_A \right)$. Since both rules are coherent, hence deterministically repeatable by Lemma \ref{lem: coherence implies DR and ID}, both updated states must lie within the same subspace defined by $P^A_y$, i.e.\ $\uruleA_A \left( P^A_y, \rho_A \right)$, $\uruleL_A \left( P^A_y, \rho_A \right) \in \bar{\mathcal{S}}(\Hs_y)$. Consider a set of observables of $\Hs_y$ which allow one to reconstruct any quantum state in $\bar{\mathcal{S}}(\Hs_y)$ tomographically. The states $\uruleA_A \left( P^A_y, \rho_A \right)$ and $\uruleL_A \left( P^A_y, \rho_A \right)$ must yield different probabilities for at least one outcome $P^A_x \in \mathcal{P}(\Hs_y)$ in the set,
    \begin{equation}
        \tr{\left[ P^A_x \, \uruleA_A \left( P^A_y, \rho_A \right) \right]} \neq \tr{\left[ P^A_x \, \uruleL_A \left( P^A_y, \rho_A \right) \right]} \, .
    \end{equation}
    Substituting $\uruleL_A \left( P^A_y, \rho_A \right) = P^A_y \, \rho_A \, P^A_y $ and using the fact that $P^A_x P^A_y = P^A_y \, P^A_x = P^A_x$, we obtain
    \begin{equation}
         \tr{\left[ P^A_x \, \uruleA_A \left( P^A_y, \rho_A \right) \right]} \neq \tr{\left( P^A_x \rho_A \right)} \, ,
    \end{equation}
    which contradicts the assumption of coherence. We thus conclude that $\uruleA_A= \uruleL_A$.
\end{proof}
The information-disturbance trade-off principle, as encapsulated by the property of coherence, completely characterises the update rule for quantum measurements on non-composite systems. However, there do exist GUTs with different update rules for multi-partite systems that reduce to the single-system \Luders rule. The correlation-free theory of Sec.\ \ref{Correlation-freeQM} defined by the update rule $\urule^{\mathsf{locL}}$ is an explicit example. Straightforward generalisations of coherence to composite systems do not seem to be sufficiently strong to generalise the proof of Theorem \ref{thm: coherence and Luders}. For example, the assumption that a sequence of measurements of $P_y^A$ and $P_x^A$ (see Def.\ \ref{def: coherence}) on subsystem $\Hs_A$ leads to the same post-measurement state for $\Hs_{AB}$ as a single measurement of $P_x^A$ fails to rule out correlation-free quantum theory. 

Additional assumptions are necessary to fully recover the \Luders rule $\uruleL_{AB}$ for composite systems. One such assumption is composition compatibility (Def.\ \ref{def: composition compatibility})---see Theorem \ref{thm: Luders from coherence and CC} below. This final step in deriving the projection postulate is often overlooked \cite{bell_moral_1966, herbut_compatibility_2004, martinez_search_1990}, as the extension of $\uruleL_A$ to local measurements on composite systems is typically assumed implicitly, despite not being stated in or implied by the standard postulates.

\begin{thm} \label{thm: Luders from coherence and CC}
    The \Luders rule $\uruleL$ is the unique generalised state-update rule that satisfies both coherence and composition compatibility.
\end{thm}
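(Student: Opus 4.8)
The plan is to bootstrap from Theorem \ref{thm: coherence and Luders}, which already fixes the update rule on every \emph{non-composite} system, and to use composition compatibility as the bridge that transports that constraint to every composite system. First I would dispose of the easy half: $\uruleL$ is a bona fide generalised state-update rule in the sense of Def.\ \ref{def: update rule} --- as recorded in Sec.\ \ref{sec: quantum theory (extensions)}, it is $1$-homogeneous, Born-consistent, and satisfies A1--A6 --- its non-composite restriction is coherent by Theorem \ref{thm: coherence and Luders}, and it satisfies composition compatibility essentially by construction, since Eq.\ \eqref{eq: def CC} is precisely the manifest identity that a local measurement of $P_x^A$ and a global measurement of $P_x^A \otimes \id_B$ produce the same state in Eq.\ \eqref{eq: Lueders rule full}. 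So $\uruleL$ is one rule with the required properties; the content is uniqueness.

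For uniqueness, let $\uruleA = \{\uruleA_{AB}\}$ be any generalised state-update rule that is coherent and composition compatible, where ``coherent'' is read as: for every Hilbert space $\Hs$ the induced non-composite rule $\uruleA_\Hs$ (obtained via Eq.\ \eqref{eq: single rule from composite rule}) satisfies Def.\ \ref{def: coherence}. Theorem \ref{thm: coherence and Luders} then forces $\uruleA_\Hs(Q,\sigma) = Q\sigma Q$ for every $\Hs$, every $Q \in \mathcal{P}(\Hs)$ and every $\sigma \in \bar{\mathcal{S}}(\Hs)$; apply this in particular to $\Hs = \Hs_A \otimes \Hs_B$. Now fix a bipartition $\Hs_A \otimes \Hs_B$, a state $\rho_{AB}$, and a local outcome $P_x^A$. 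Composition compatibility (Def.\ \ref{def: composition compatibility}) gives $\uruleA_{AB}(P_x^A,\rho_{AB}) = \uruleA_{AB}(P_x^A \otimes \id_B, \rho_{AB})$, and the right-hand side --- whose domain, as stressed after Def.\ \ref{def: composition compatibility}, is $\mathcal{P}(\Hs_A \otimes \Hs_B) \times \bar{\mathcal{S}}(\Hs_A \otimes \Hs_B)$, i.e.\ a \emph{global} measurement on $\Hs_A \otimes \Hs_B$ --- is governed by the non-composite rule on $\Hs_A \otimes \Hs_B$. Hence it equals $(P_x^A \otimes \id_B)\,\rho_{AB}\,(P_x^A \otimes \id_B) = \uruleL_{AB}(P_x^A,\rho_{AB})$. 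The same computation with $A$ and $B$ exchanged covers local measurements on $\Hs_B$, and, grouping the remaining tensor factors into a single ``complement'', the multipartite case; self-consistency A4 (Eq.\ \eqref{eq: self-consistency}) guarantees these groupings are immaterial. Therefore $\uruleA_{AB} = \uruleL_{AB}$ for every pair of Hilbert spaces, so $\uruleA = \uruleL$.

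The one step that needs care --- and the step I expect to be the main obstacle --- is the identification used in the middle of the argument: that $\uruleA_{AB}(P_x^A \otimes \id_B, \rho_{AB})$, the state produced by a \emph{global} measurement of $P_x^A \otimes \id_B$ on the composite system, is the output of the \emph{non-composite} update rule on the single Hilbert space $\Hs_A \otimes \Hs_B$, and is thus pinned down by Theorem \ref{thm: coherence and Luders}. This relies on the framework's convention that a global measurement does not refer to the product structure of $\Hs_A \otimes \Hs_B$, together with self-consistency A4 applied with the whole of $\Hs_A \otimes \Hs_B$ as the probed system and a trivial complement (so that the composite-system rule collapses to the non-composite one, cf.\ the special case of Eq.\ \eqref{eq: single rule from composite rule}). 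Once this bridge is justified, the theorem follows immediately: coherence fixes the rule on every individual Hilbert space, and composition compatibility forces the local rule on each composite system to agree with that same global rule, leaving $\uruleL$ as the only possibility.
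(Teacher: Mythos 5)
Your proposal is correct and takes essentially the same route as the paper's (much terser) proof: coherence together with Theorem \ref{thm: coherence and Luders} fixes the update rule on every single Hilbert space---including $\Hs_A \otimes \Hs_B$ viewed as a whole---and composition compatibility then forces the local-measurement rule on each composite system to coincide with that global, hence \Luders, rule. The ``bridge'' step you single out (that a global measurement of $P_x^A \otimes \id_B$ is governed by the non-composite rule on $\Hs_A \otimes \Hs_B$) is precisely what the paper delegates to the discussion in Sec.\ \ref{sec: quantum theory (extensions)}, so your more explicit treatment is a faithful expansion rather than a different argument.
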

\begin{proof}
    Theorem \ref{thm: coherence and Luders} derives the \Luders projection $\uruleL_A$ for non-composite systems from coherence alone. Then, the quantum-mechanical extension to composite systems $\uruleL_{AB}$ is fixed by Eq.\ \eqref{eq: def CC} defining composition compatibility, as already discussed in Sec.\ \ref{sec: quantum theory (extensions)}.
\end{proof}

This derivation of the \Luders rule does \textit{not} rely on assuming linearity of the update rule. There may, of course, be other physically appealing operational principles ruling out all update rules but the \Luders rule. For instance, it seems promising to explore the implications of \textit{nonlocality}. Could quantum theory be singled out by imposing specific constraints on the degree of nonlocality exhibited by generalised state-update theories?

\section{Conclusions} \label{Sec: Discussion}

\subsection{Summary}

We have introduced a framework that allows us to systematically investigate modifications of the projection postulate. Theories with generalised state-update rules (GURs) differ from quantum theory only in assigning other post-measurement states.

Upon formalising the concept of state-update rules, we notice that the quantum mechanical L{\"u}ders rule is typically presented in an operationally incomplete way. Spelling out the projection postulate \eqref{eq: normalised collapsed state} for \textit{single} systems does not entail a unique update rule when local measurements are performed on \textit{composite} systems. In standard quantum theory, the necessary extension from single to composite systems is normally assumed implicitly to have a specific form. This observation is a simple example of the subtleties in the definition of update rules. The specific update rule used in composite quantum systems is, of course, well motivated by experimentally verified correlations between measurement outcomes obtained for entangled states. 

Any valid state-update rule must satisfy six core requirements which include the uniqueness of post-measurement states, no-signalling, and consistency for measurements carried out by observers with access to subsystems only. We have shown that quantum theory and passive quantum theory \cite{fiorentino_quantum_2023} satisfy the requirements. Three new examples of consistent generalised state-update rules have been defined, each one leading to a toy theory structurally different from quantum theory.

\textit{Correlation-free quantum theory} turns out to be \textit{locally} indistinguishable from standard quantum theory since the update rule for \textit{single} systems coincides with the projection postulate. However, the proposed update rule does not extend in the standard way to \textit{composite systems}. The update rule is ``locally L{\"u}ders'' but local measurements performed on entangled states do not create the correlations known from quantum theory. 

\textit{Depolarising measurements} are described by an update rule that sends any pre-measurement state to the maximally mixed state, irrespective of the measurement outcome. The update effectively acts as a completely depolarising channel, flattening the probability distribution for subsequent outcomes of measurements. No information about the pre-measurement state is retained.  

\textit{Probability-amplifying projective rules} lead to post-measurement states that assign a larger probability to the measurement outcome than the pre-measurement state did. The standard projective update rule is the extreme case of this rule: the outcome of the second measurement is necessarily the same as that of the first one, implying deterministic repeatability. The \Luders rule for single systems can be approximated arbitrarily well by probability-amplifying projective state updates.

To avoid signalling in theories with depolarising or probability-amplifying update rules, careful adjustments are needed when extending them to composite systems. The ways in which state-update rules may violate core requirements have been illustrated by a number of (ultimately invalid) GURs.

Having established a variety of acceptable update rules, it is natural to look for properties that single out the \Luders rule. We have shown that the property of  \textit{coherence} uniquely implies the \Luders rule for non-composite systems. This result agrees with earlier ones based on similar concepts describing a form of information-disturbance trade-off. However, our systematic approach to GURs highlights the fact that many derivations of the \Luders rule may be considered incomplete by not explicitly addressing \textit{composite} systems. We find that coherence on its own is \textit{insufficient} to single out the projection postulate when considering composite systems; equally, composition compatibility alone is not strong enough to do so. However, combining it with coherence implies the projective \Luders rule for composite systems, too (Theorem \ref{thm: Luders from coherence and CC}).

\subsection{Discussion and outlook}

The framework of generalised state-update rules we present affords us with a bird's-eye view of the projection postulate and physically reasonable modifications thereof, their basic features and their implications. It is useful to think of update rules as generalisations of quantum instruments that are neither linear nor completely positive, all the while still providing consistent descriptions of measurements on (hypothetical) quantum systems.

Non-linear update rules may enable observers to distinguish between proper and improper mixtures. In the corresponding foil theories, density operators will \textit{not} provide a complete description of the hypothetical physical systems. Importantly, complete positivity is not strong enough to ensure that such measurement-induced transformations assign post-measurement states  to composite systems \textit{consistently}. In the words of \cite{czachor_complete_1998}, complete positivity is ``physically unfitting'' for this purpose.

Reconstructions of quantum theory that engage---either directly or indirectly---with post-measure\-ment states usually \textit{assume} measurement-induced state transformations to be linear in the pre-measure\-ment state \cite{barrett_information_2007, kleinmann_sequences_2014, masanes_measurement_2019}. This assumption, often left implicit \cite{stacey_masanes-galley-muller_2022, kent_measurements_2023},  \textit{a priori} excludes measurement behaviours such as the trivial \textit{non-collapsing} rule of  ``passive quantum theory'' that are, in fact, operationally well-defined and consistent with fundamental principles such as no-signalling (see Sec.\ \ref{subsec: assumptions we make}). Preliminary results based on the framework of GURs and earlier work (e.g.\  \cite{ozawa_quantum_1984, flatt_gleason-busch_2017}) suggest that the linearity of update rules could be \textit{derived} from natural operational assumptions, 
\textcolor{modification}{mirroring efforts to establish the linearity of the quantum time evolution \cite{gisin_weinbergs_1990, wilson_origin_2023}}.

It seems promising to investigate the role of \textit{nonlocality} from the perspective of theories with generalised update rules. Do other update rules exist that are capable of replicating the correlations found in quantum theory? What is more, can they lead to super-quantum correlations?

\subsection*{Acknowledgements} The authors are grateful for support through grant RPG-2024-201 by the Leverhulme Trust.

\appendix

\section{Derivations of the \Luders rule -- A brief survey} \label{sec:historical review Luders}

Derivations of the \Luders rule have a long history, with similar arguments being made over the years. We will briefly review derivations known to us, highlighting the assumptions made. A key observation is that most of the justifications focus on non-composite systems, either ignoring measurements on composite systems or making implicit assumptions about them.

In 1966, Bell and Nauenberg showed that the quantum mechanical state-update can be obtained by assuming that the outcome probability distribution of a common refinement for commuting observables $M_1$ and $M_2$ equals the joint probability distribution obtained by measuring $M_1$ and $M_2$ sequentially, in any order \cite{bell_moral_1966}. This assumption is mathematically equivalent to that of \textit{coherence} (Def.\ \ref{def: coherence}) employed in the proof of Theorem \ref{thm: coherence and Luders} in Sec.\ \ref{sec: singling out Lueders}. To the best of our knowledge, the only reference to this early operational derivation of the \Luders rule in non-composite systems is by Herbut \cite{herbut_compatibility_2004}. Over the last 40 years, the argument by Bell and Nauenberg has been rediscovered at least three times, using the concept of coherence.

In 1983, Cassinelli and Zanghì formulated quantum theory in terms of a generalised probability space and showed how to recover the single-system \Luders rule in this context. To do so, they describe the state-update following measurements in terms of a generalised conditional probability defined on the generalised probability space---i.e.\ a probability measure on $\mathcal{P}(\Hs)$ compatible with Born's rule and satisfying ``coherence'' \cite{cassinelli_conditional_1983}.

In 2009, Khrennikov showed that the \Luders rule for single systems follows from von Neumann's projection postulate for non-degenerate observables if ``coherence'' is assumed \cite{khrennikov_neumann_2009}. Claiming that ``this important observation remained unnoticed for the past 70 years'', the author appears to be unaware of Bell and Nauenberg's work.

In 2014, Kleinmann developed a formalism for sequential measurements in a broad class of generalised probabilistic theories (GPTs), introducing a generalisation of the \Luders rule based on coherence \cite{kleinmann_sequences_2014}. A simple derivation of the non-composite \Luders rule based on coherence is also given, without reference to earlier work. When applied to quantum theory, i.e. in a  Hilbert-space setting for states, Kleinmann's approach reproduces  \textit{convex-linear update rules} for single systems. This restriction is due to assuming \textit{preparation indistinguishability} throughout (see Sec.\ \ref{sec: quantum theory (extensions)}), thereby excluding operationally valid alternatives such as the passive update rule from the outset.

Herbut actually presents \textit{two} distinct derivations applying to measurements on non-composite systems in a paper from 1969 \cite{herbut_derivation_1969}. In an information-theoretic approach, he postulates (i) deterministic repeatability (see Sec.\ \ref{Qmstateupdaterule}) and that (ii) the post-measurement state minimises its distance from the pre-measurement state. The \Luders rule is then shown to  map the pre-measurement state to the least distinguishable state compatible with outcome repeatability. This result remained largely unnoticed, while anticipating later arguments deriving the state-update from minimisation principles, including those by Marchand \cite{marchand_statistical_1977}, Dieks \cite{dieks_distance_1983}, and Hadjisavvas \cite{hadjisavvas_distance_1981}. More recently, the single-system \Luders rule was obtained from minimising non-symmetric quantum information distances \cite{kostecki_luders_2014, hellmann_quantum_2016}, thereby establishing a mathematical connection to the Bayes-Laplace rule. Ozawa \cite{ozawa_quantum_1984, ozawa_quantum_1998} derives the quantum instrument formalism describing state-updates from Bayes' rule by assuming the standard joint probability formula for local quantum measurements. In a sense, the author adopts a \textit{stronger} interpretation of postulate $(\mathsf{B})$ (cf.\ Sec.\ \ref{sec: axioms QT}) than the one used explicitly by us, and implicitly by other authors \cite{kent_nonlinearity_2005, aaronson_space_2016, kent_quantum_2021, kent_measurements_2023} where $(\mathsf{L})$ is modified. The stronger interpretation of Born's rule is assumed to govern  individual measurement outcomes as well as \emph{correlations} between \textit{local} measurements carried out by different experimenters with access to parts of a composite system only. Essentially, it subsumes the property of local tomography (cf.\ Sec.\ \ref{sec: quantum theory (extensions)}). 

Herbut's second derivation in \cite{herbut_derivation_1969} takes an operational approach. In addition to deterministic repeatability, it is assumed that performing an unconditional measurement of observable $M$ does not affect the expectation values of any observables commuting with $M$. The argument focusses on measurements on single systems. However, Herbut also briefly considers composite systems, discussing in particular how local \Luders measurements modify joint states without enabling signalling. When doing so, he implicitly adopts the standard extension to composite systems (see Sec.\ \ref{sec: quantum theory (extensions)}). What is more, Herbut takes convex-linearity over states to be a fundamental requirement for state updates, without justifying this assumption. 

Herbut revisits the question in a 1974 paper \cite{herbut_minimal_1974}. Here, the \Luders rule for single systems is obtained by combining the assumption of deterministic repeatability with the ``preservation of sharp values''. An observable $M$ has a \textit{sharp value} $x$ if  $\text{Tr}(P_x^A \rho )=1$ holds for some $P_x \in M$. This value is \textit{preserved} if, given an initial state $\rho$ with a sharp $M$-value $x$, a measurement of any observable commuting with $M$ leaves unchanged the value $x$, regardless of the observed outcome. When proving this result for improper mixtures, Herbut implicitly adopts ``composition compatibility'' to describe local measurements. 

Martinez \cite{martinez_search_1990}  offers an early account of known derivations of the \Luders rule including Herbut's work but not Bell and Nauenberg's contribution. The paper also demonstrates that the \textit{converse of \textit{Lüders theorem}} \cite{luders_uber_1950, busch_quantum_1996} implies the quantum mechanical collapse for non-composite systems: if, for any pair of commuting observables $M_1$ and $M_2$ and any initial state, we assume that the sequence of measurements $M_1 \prec M_2 \prec M_1$ always produces identical outcomes for $M_1$ regardless of the outcome of the $M_2$ measurement, then the post-measurement state must necessarily be given by the \Luders rule. Martinez further examines a quantum-logical derivation employing the ``Sasaki hook'' presented in \cite{friedman_quantum_1978}. The extension of the update rule to local measurements on subsystems is not considered.

The proof that establishes the \Luders instrument as the unique ideal quantum instrument for sharp discrete measurements dates back to Davies \cite{davies_quantum_1976}, as recognised in \cite{busch_quantum_1996, busch_repeatable_1995}. The notion of  a quantum instrument incorporates two key assumptions: convex-linearity of the single-system state-update rule, and the standard linear extension to composite systems enabled by the complete positivity of $\uruleA_A$ (see Eq.\ \eqref{eq: extension Luders (CP)} in Sec.\ \ref{sec: quantum theory (extensions)}). This argument for isolating the \Luders collapse for single systems has been rediscovered occasionally (see \ \cite{marinkovic_note_1983}, for example).

In 2019, Masanes et al.\ advance the argument that both the Born rule and the post-measurement state rule can be derived from the other postulates of quantum theory when supplemented by suitable operational constraints \cite{masanes_measurement_2019}. In doing so, they consider alternative rules for computing outcome probabilities, explicitly discussing both composite systems and local measurements. Having established the uniqueness of the Born rule within their framework, they assert that quantum instruments constitute the only consistent description of state updates. This conclusion, along with their broader framework, has attracted some criticism \cite{stacey_masanes-galley-muller_2022, galley_reply_2022, kent_measurements_2023, masanes_response_2025, stacey_contradictions_2024}. The argument for quantum instruments depends on an assumption regarding sequential measurements that, as already shown in \cite{flatt_gleason-busch_2017}, is mathematically equivalent to requiring convex-linearity of state updates. 

Conceptually, Masanes et al.'s approach mirrors Kleinmann's framework \cite{kleinmann_sequences_2014} in that it is limited \textit{a priori} to update rules satisfying preparation indistinguishability. This restriction explains why they adopt the completely positive extension \eqref{eq: extension Luders (CP)} for local measurements on composite systems: any other extension would necessarily violate their assumption of preparation indistinguishability.

In retrospect, the \Luders rule is seen to have been derived on the basis of operational, information-theoretic, quantum-logical, and epistemically motivated assumptions. Most of the derivations are limited to single-system update rules, and their extension to local measurements on subsystems is either ignored or tacitly assumed. The extension of state-update rules is, however,  not trivial, since the single-system rule alone does \textit{not} determine it. Additional structural assumptions such as preparation indistinguishability or composition compatibility  (cf.\ Theorem \ref{thm: Luders from coherence and CC}) are required to exclude alternative update rules in the multi-partite case.

\printbibliography[heading=bibintoc]



\end{document}